\newcommand{\fref}[1]{Fig.~\ref{#1}}
\newcommand{\tref}[1]{Table~\ref{#1}}
\newcommand{\sref}[1]{Section~\ref{#1}}
\newenvironment{algo}[1][!htbp]
  {
   \begin{algorithm}[#1]%
  }{\end{algorithm}}
\newenvironment{proced}[1][!htbp]
  {
   \begin{algorithm}[#1]%
  }{\end{algorithm}}
\providecommand{\U}[1]{\protect\rule{.1in}{.1in}}
\newtheorem{theorem}{Theorem}
\newtheorem{lemma}{Lemma}
\newtheorem{problem}{Problem}
\newtheorem{proposition}{Proposition}
\newenvironment{proof}[1][Proof]{\textbf{#1.} }{\ \rule{0.5em}{0.5em}}
\begin{document}
\title{State Stabilization for Gate-Model Quantum Computers}
\author{Laszlo Gyongyosi\thanks{School of Electronics and Computer Science, University of Southampton, Southampton SO17 1BJ, U.K., and Department of Networked Systems and Services, Budapest University of Technology and Economics, 1117 Budapest, Hungary, and MTA-BME Information Systems Research Group, Hungarian Academy of Sciences, 1051 Budapest, Hungary.}
\and Sandor Imre\thanks{Department of Networked Systems and Services, Budapest University of Technology and Economics, 1117 Budapest, Hungary.}}
\date{}

\maketitle
\begin{abstract}
Gate-model quantum computers can allow quantum computations in near-term implementations. The stabilization of an optimal quantum state of a quantum computer is a challenge, since it requires stable quantum evolutions via a precise calibration of the unitaries. Here, we propose a method for the stabilization of an optimal quantum state of a quantum computer through an arbitrary number of running sequences. The optimal state of the quantum computer is set to maximize an objective function of an arbitrary problem fed into the quantum computer. We also propose a procedure to classify the stabilized quantum states of the quantum computer into stability classes. The results are convenient for gate-model quantum computations and near-term quantum computers.
\end{abstract}
%\begin{keywords}
%Quantum computers, gate-model quantum computers, quantum computations, optimization.
%\end{keywords}

\section{Introduction}
\label{sec1}
Quantum computers can make possible quantum computations for efficient problem solving \cite{ref1,ref2,ref3,ref4,ref5,ref6,ref7,ref8,ref9,ref10,ref11,ref12,ref13,ref14,ref15,ref16,ref17,ref18,ref19}. Gate-based quantum computations represent a way to construct gate-model quantum computers. In a gate-model quantum computer architecture, computations are implemented via sequences of unitary operations \cite{ref9,ref10,ref11,ref12, ref19,ref20,ref21,ref22,refp1,refp2,refp3,refp4,refp5}. Gate-model quantum computers allow establishing experimental quantum computations in near-term architectures \cite{refpr,refha,aar,logic,tel,depth,ft,refa3,refa4,refa5,refa6,refa7,song}. Practical demonstrations of gate-model quantum computers have been already proposed \cite{ref1,ref2,ref3,ref4,ref5,ref6,ref7,ref8,ref9,ref10,ref11,ref12} and several physical-layer developments are currently in progress.

Finding a stable quantum state of a quantum computer is a challenge, since it requires precise unitaries that yield stable quantum evolutions in the quantum computer. The problem is further increased if the stable system state must be available for a pre-determined time or for a pre-determined number of running sequences. Particularly, the quantum state of a quantum computer subject to stabilization also coincides with the optimal quantum state. The optimal quantum state of a quantum computer maximizes a particular objective function of an arbitrary computational problem fed into the quantum computer. The problem therefore is to fix the quantum state of the quantum computer in the optimal state for an arbitrary number of running sequences that is determined by the actual environment or by the current problem. Another challenge connected to the problem of stabilization of the system state of a quantum computer is the classification of the sequences of the stabilized quantum states into stability-classes. Practically, a solution to these problems can be covered by an unsupervised learning method.

Here, we propose a method for the stabilization of an optimal quantum state of a quantum computer through an arbitrary number of running sequences. We define a solution that utilizes unsupervised learning algorithms to determine the stable quantum states of the quantum computer and to classify the stable quantum states into stability classes. The proposed results are useful for experimental gate-based quantum computations and near-term quantum computer architectures.

The novel contributions of our manuscript are as follows:
\begin{enumerate}
\item We propose a method for the stabilization of an optimal quantum state of a quantum computer through an arbitrary number of running sequences. 
\item We define a solution that utilizes unsupervised learning algorithms to determine the stable quantum states of the quantum computer.
\item We evaluate a solution to classify the stable system states into stability classes. 
\end{enumerate}
This paper is organized as follows. \sref{sec2} provides the problem statement. \sref{sec3} discusses the stabilization procedure of an optimal quantum state of a quantum computer. \sref{sec4} defines an unsupervised learning method to find the stable quantum states and the stability classes of the stabilized quantum states. In \sref{nume}, a numerical evaluation is proposed. Finally, \sref{sec5} concludes with the results. Supplemental information is included in the Appendix.

\section{Problem Statement}
\label{sec2}
Let $QG$ be the quantum gate structure of a gate-model quantum computer with a sequence of $L$ unitaries \cite{ref9,ref10,ref11,ref12} with an $n$-length input system $\left| \psi  \right\rangle $, 
\begin{equation} \label{in}
\left| \psi  \right\rangle =\sum\limits_{i=0}^{{{d}^{n}-1}}{{{\alpha }_{i}}\left| i \right\rangle },
\end{equation}
where $d$ is the dimension ($d$=2 for a qubit system), $\sum\nolimits_{i=0}^{{{d}^{n}}-1}{{{\left| {{\alpha }_{i}} \right|}^{2}}}=1$, and let 
\begin{equation} \label{1)} 
{| \vec{\theta }^{*} \rangle} =U_{L} \left(\theta _{L}^{*} \right)U_{L-1} \left(\theta _{L-1}^{*} \right)\ldots U_{1} \left(\theta _{1}^{*} \right) \left| \psi  \right\rangle  
\end{equation} 
be the optimal system state of the quantum computer that maximizes a particular objective function $f(\vec{\theta }^{*} )$,
\begin{equation}
f(\vec{\theta }^{*} )=\langle \vec{\theta }^{*} |C|\vec{\theta }^{*} \rangle  
\end{equation}
of an arbitrary problem fed into the quantum computer, where $C$ is the classical value of the objective function, while $\vec{\theta }^{*} $ is the gate parameter vector, 
\begin{equation} \label{2)} 
\vec{\theta }^{*} =\left[\theta _{1}^{*} ,\ldots ,\theta _{L}^{*} \right]^{T}  
\end{equation} 
that identifies the $L$ unitaries, $U_{1} \left(\theta _{1}^{*} \right),\ldots ,U_{1} \left(\theta _{L}^{*} \right)$, of the $QG$ quantum circuit of the quantum computer in the optimal state ${| \vec{\theta }^{*} \rangle} $, such that an $i$-th unitary, $U_{i} \left(\theta _{i}^{*} \right)$ is as \cite{ref10}
\begin{equation} \label{3)} 
U_{i} \left(\theta _{i}^{*} \right)=\exp \left(-i\theta _{i}^{*} P\right),                                                               
\end{equation} 
where $\theta _{i}^{*}$ is the gate parameter (real continuous variable) of unitary $U_{i}$, $P$ is a generalized Pauli operator formulated by the tensor product of Pauli operators $\left\{X,Y,Z\right\}$ \cite{ref10,ref11}. 

The aim is to stabilize the ${| \vec{\theta }^{*} \rangle} $ optimal state of the quantum computer through $R$ running sequences via unsupervised learning of the evolution of the unitaries in the quantum computer. 

The $R$ running sequences refers to $R$ input systems fed into the input of the quantum computer, such that in an $r$-th running sequence, $r=1,\ldots ,R$, an $r$-th input system, $\left| {{\psi }_{r}} \right\rangle $ (defined as in \eqref{in}), is evolved via the sequence of the $L$ uniaries of the quatum computer. The $R$ running sequences identify an input system, $\left| {{\psi }_{in}} \right\rangle $, formulated via $R$, $n$-length quantum systems, as
\begin{equation}
\left| {{\psi }_{in}} \right\rangle =\left| {{\psi }_{1}} \right\rangle \otimes \ldots \otimes \left| {{\psi }_{R}} \right\rangle ,
\end{equation}
where it is considered that the $R$ input systems are unentangled.

Let $\vec{\varphi }$ be the gate parameter vector associated with the stable system state ${\left| \vec{\varphi } \right\rangle} $, 
\begin{equation} \label{ZEqnNum600779} 
{\left| \vec{\varphi } \right\rangle} =U_{L} \left(\varphi _{L} \right)U_{L-1} \left(\varphi _{L-1} \right)\ldots U_{1} \left(\varphi _{1} \right) \left| \psi  \right\rangle 
\end{equation} 
as
\begin{equation} \label{5)} 
\vec{\varphi }=\left[\varphi _{1} ,\ldots ,\varphi _{L} \right]^{T} ,                                                             
\end{equation} 
where $\varphi _{i} \in \left[0,\pi \right]$ is the gate parameter of unitary $U_{i} $ in the stabilized system state ${\left| \vec{\varphi } \right\rangle} $, such that the objective function value is stabilized into
\begin{equation} \label{6)} 
f(\vec{\varphi })=\langle \vec{\varphi }|C|\vec{\varphi }\rangle =f(\vec{\theta }^{*}).                                                     
\end{equation} 
For the $R$ sequences of the quantum computer, we define matrices $\alpha $ and $\beta $ as 
\begin{equation} \label{ZEqnNum579558} 
\alpha =[\vec{\theta }_{1}^{*} ,\ldots ,\vec{\theta }_{R}^{*}],                                                                 
\end{equation} 
where $\vec{\theta }_{r}^{*} =[\theta _{r,1}^{*} ,\ldots ,\theta _{r,L}^{*}]^{T}$ identifies the quantum state ${| \vec{\theta }_{r}^{*} \rangle}$ of an $r$-th running sequence of the quantum computer, while 
\begin{equation} \label{ZEqnNum728789} 
\beta =\left[\vec{\varphi }_{1} ,\ldots ,\vec{\varphi }_{R} \right],                                                                
\end{equation} 
where $\vec{\varphi }_{r} =\left[\varphi _{r,1} ,\ldots ,\varphi _{r,L} \right]^{T} $, identifies the stabilized quantum state ${\left| \vec{\varphi }_{r}  \right\rangle} $ of an $r$-th sequence of the quantum computer.

The problem therefore is to find $\beta $ from $\alpha $ that stabilizes the ${| \vec{\theta }^{*} \rangle} $ optimal state of the quantum computer through $R$ sequences as
\begin{equation} \label{ZEqnNum481709} 
\beta =S^{T} \alpha , 
\end{equation} 
where $S$ is a stabilizer matrix, 
\begin{equation} \label{10)} 
S^{T} S=I,                                                                 
\end{equation} 
and $I$ is the identity matrix.

The problems to be solved are therefore summarized as follows. 
\begin{problem}
Find $S$ to construct $\beta $ \eqref{ZEqnNum728789} from $\alpha $ \eqref{ZEqnNum579558} to stabilize the quantum computer in ${| \vec{\theta }^{*} \rangle} $ via ${\left| \vec{\varphi } \right\rangle} $ for all running sequences.
\end{problem}
\begin{problem}
Describe the stability of $\beta $ via unsupervised learning of the stability levels of the ${\left| \vec{\varphi }_{r}  \right\rangle} $ quantum states of $\beta $.
\end{problem}

The resolutions of Problems 1 and 2 are proposed in Theorems 1 and 2. The solution framework ${\rm {\mathcal F}}$ is defined via a ${\rm {\mathcal P}}_{S} $ stabilization procedure with an embedded stabilization algorithm ${\rm {\mathcal A}}_{S} $ (see Theorem 1), and via an ${\rm {\mathcal A}}_{C} $ classification algorithm that characterizes the stability class of the results of  ${\rm {\mathcal P}}_{S} $ (see Theorem 2). \fref{fig1} depicts the system model.

\begin{center}
\begin{figure*}[!htbp]
%\vspace{-0.5cm}
\begin{center}
\includegraphics[angle = 0,width=0.9\linewidth]{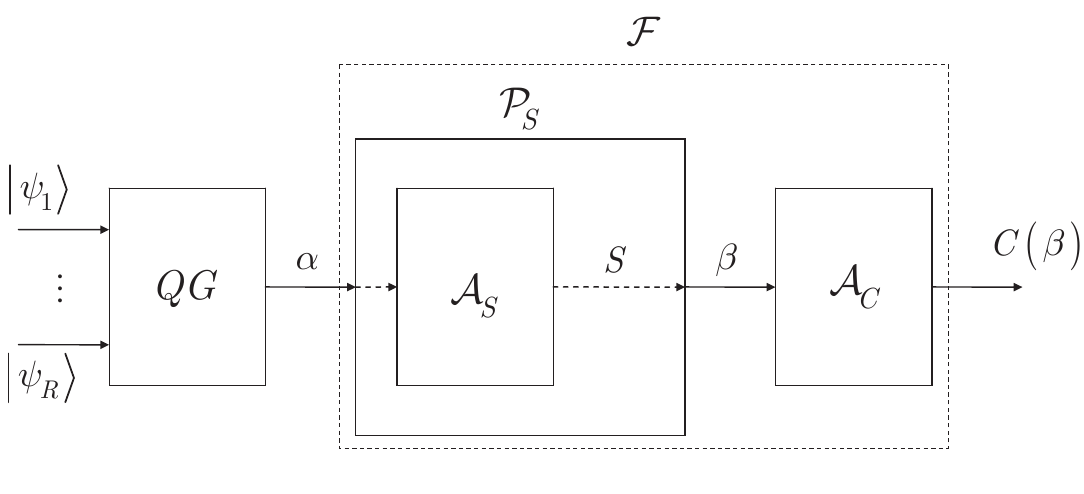}
\caption{The framework ${\rm {\mathcal F}}$ for the stabilization of the optimal state of the quantum computer and the stability-class determination. In the $R$ running sequences, $R$ input systems are fed into the input of the quantum computer, in an $r$-th running sequence, $r=1,\ldots ,R$, an $r$-th input system, $\left| {{\psi }_{r}} \right\rangle =\sum\nolimits_{i}{{{\alpha }_{i}}}\left| i \right\rangle $, is evolved via the sequence of the $L$ uniaries of the quatum computer. The $R$ running sequences identify an input system $\left| {{\psi }_{in}} \right\rangle =\left| {{\psi }_{1}} \right\rangle \otimes \ldots \otimes \left| {{\psi }_{R}} \right\rangle $ (considering that the $R$ input systems are unentangled). The $R$ running sequences of the $QG$ structure of the quantum computer produces $\alpha =[\vec{\theta }_{1}^{*} ,\ldots ,\vec{\theta }_{R}^{*}]$, where $\vec{\theta }_{r}^{*} =[\theta _{r,1}^{*} ,\ldots ,\theta _{r,L}^{*}]^{T}$. The ${\rm {\mathcal P}}_{S} $ stabilization procedure outputs $\beta =\left[\vec{\varphi }_{1} ,\ldots ,\vec{\varphi }_{R} \right]$, where $\vec{\varphi }_{r} =\left[\varphi _{r,1} ,\ldots ,\varphi _{r,L} \right]^{T} $, via an embedded stabilization algorithm ${\rm {\mathcal A}}_{S} $ that determines the $S$ stabilizer matrix. The $C\left(\beta \right)$ stability-level of the resulting $\beta $ is determined via a classification algorithm ${\rm {\mathcal A}}_{C} $. The ${\rm {\mathcal P}}_{S} $ and ${\rm {\mathcal A}}_{S} $ methods are realized as unsupervised learning.} 
 \label{fig1}
 \end{center}
\end{figure*}
\end{center}

\section{Stabilization of the Optimal State of the Quantum Computer}
\label{sec3}
\begin{theorem}
The $S$ matrix for the stabilization of the ${| \vec{\theta }^{*} \rangle} $ optimal state of the quantum computer via $\beta =S^{T} \alpha ,$ can be determined via the minimization of an objective function $F^{*} $.
\end{theorem}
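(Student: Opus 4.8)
The plan is to recast the search for the stabilizer $S$ as a constrained least-squares problem over the orthogonal group and then to exhibit its minimizer in closed form. Since every column of $\beta$ must encode the stable state ${\left| \vec{\varphi } \right\rangle}$ that reproduces the optimal objective value $f(\vec{\theta }^{*})$, the natural target is the $L\times R$ matrix $\Theta ^{*}$ whose $R$ columns all equal the optimal parameter vector $\vec{\theta }^{*}$. First I would define the objective function as the aggregate deviation of the transformed sequences from this common target,
\begin{equation}
F^{*} =\left\| S^{T} \alpha -\Theta ^{*} \right\| _{F}^{2} =\sum _{r=1}^{R} \left\| \vec{\varphi }_{r} -\vec{\theta }^{*} \right\| ^{2} ,
\end{equation}
to be minimized subject to the orthogonality constraint $S^{T} S=I$. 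Minimizing $F^{*}$ drives each stabilized column $\vec{\varphi }_{r} =S^{T} \vec{\theta }_{r}^{*}$ toward the single optimal vector, which is precisely the content of the stabilization requirement $f(\vec{\varphi })=f(\vec{\theta }^{*})$.

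The next step is to expand $F^{*}$ and use the constraint to peel off the terms that are independent of $S$. Writing $Q=S^{T}$ and expanding the Frobenius norm via the trace,
\begin{equation}
F^{*} =\mathrm{Tr}(\alpha ^{T} Q^{T} Q\alpha )-2\,\mathrm{Tr}(Q^{T} \Theta ^{*} \alpha ^{T})+\mathrm{Tr}((\Theta ^{*})^{T} \Theta ^{*}),
\end{equation}
and invoking $Q^{T} Q=S S^{T} =I$, the first and third traces become constants. Hence minimizing $F^{*}$ is equivalent to maximizing the linear functional $\mathrm{Tr}(Q^{T} M)$ over the orthogonal group, where $M=\Theta ^{*} \alpha ^{T}$ is an $L\times L$ matrix.

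This trace maximization is the orthogonal Procrustes problem, which I would solve via the singular value decomposition $M=U\Sigma V^{T}$. Substituting $Z=V^{T} Q^{T} U$, which is again orthogonal, gives $\mathrm{Tr}(Q^{T} M)=\mathrm{Tr}(Z\Sigma)=\sum _{i} Z_{ii} \sigma _{i} \le \sum _{i} \sigma _{i}$, with equality exactly when $Z=I$. This forces $Q=S^{T} =UV^{T}$, so that
\begin{equation}
S=VU^{T} .
\end{equation}
A direct computation gives $S^{T} S=UV^{T} VU^{T} =UU^{T} =I$, so the orthogonality constraint is honoured automatically, and substituting $S$ into $\beta =S^{T} \alpha$ yields the matrix that stabilizes ${\left| \vec{\varphi } \right\rangle}$ across all $R$ running sequences. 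Thus $S$ is determined by the minimization of $F^{*}$, as claimed.

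The step I expect to be the main obstacle is not the algebra above but the justification that collapsing the parameter vectors onto $\vec{\theta }^{*}$ is what guarantees equality of the quantum objective values $f(\vec{\varphi })=f(\vec{\theta }^{*})$. The objective depends on the parameters nonlinearly through the unitaries $U_{i}(\varphi _{i})=\exp(-i\varphi _{i} P)$, so merely shrinking the parameter dispersion does not, a priori, fix the value of $\langle \vec{\varphi }|C|\vec{\varphi }\rangle$, and the transform $S^{T}$ may also push entries outside the admissible window $\varphi _{i}\in[0,\pi]$. Bridging this gap requires arguing that, near the optimum and within that window, the map $\vec{\theta }\mapsto f(\vec{\theta })$ is smooth enough that driving $F^{*}$ to its minimum drives $f(\vec{\varphi })$ to $f(\vec{\theta }^{*})$; equivalently, one may fold the objective-value condition directly into $F^{*}$ as a penalty term, so that its minimizer enforces $f(\vec{\varphi })=f(\vec{\theta }^{*})$ by construction.
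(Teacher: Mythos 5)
Your construction is internally consistent as linear algebra (the Procrustes reduction and the SVD solution $S=VU^{T}$ are correct), but it has a gap that is more basic than the one you flag at the end: under the constraint $S^{T}S=I$ with $S$ square, the map $\vec{\theta}\mapsto S^{T}\vec{\theta}$ is an isometry, so $\left\| S^{T}\vec{\theta }_{r}^{*}-S^{T}\vec{\theta }_{s}^{*}\right\| =\left\| \vec{\theta }_{r}^{*}-\vec{\theta }_{s}^{*}\right\|$ for all $r,s$. The dispersion of the columns of $\beta$ therefore equals that of $\alpha$ no matter which orthogonal $S$ you pick; minimizing your $F^{*}$ only rotates the configuration so that its centroid aligns with $\vec{\theta }^{*}$, and cannot ``drive each stabilized column toward the single optimal vector'' as you claim. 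Worse, your $M=\Theta ^{*}\alpha ^{T}=\vec{\theta }^{*}\bigl(\sum _{r}\vec{\theta }_{r}^{*}\bigr)^{T}$ is rank one, so all but one singular value vanish and the Procrustes minimizer is far from unique: $S$ is not actually ``determined'' by your $F^{*}$. The obstacle you do identify (nonlinearity of $f$ and the window $[0,\pi ]$) is real but secondary, and the paper does not resolve it either.

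The paper takes a genuinely different route that avoids the isometry trap. Its $F^{*}$ is not a fit to a fixed target but a slow-feature-analysis objective: it minimizes the sum of squared \emph{temporal differences} $\chi =\sum _{r}\left\| \Delta \left(\vec{\varphi }_{r}\right)\right\| _{2}^{2}$ between consecutive running sequences, plus a graph-Laplacian regularizer $c\tau$ built from Gaussian weights $\omega _{rs}$ on the raw differences $\Delta (\vec{\theta }_{r}^{*})$. Crucially, the normalization there is not $S^{T}S=I$ alone but $\left(\Delta \beta \right)^{T}\eta \Delta \beta =I$, which turns the problem into a generalized Rayleigh quotient; the minimizer is obtained from the generalized eigenvalue problem $\left(\Delta \alpha \sigma \left(\Delta \alpha \right)^{T}\right)S=\lambda \left(\Delta \alpha \eta \left(\Delta \alpha \right)^{T}\right)S$ with $\sigma =I+c\left(\eta -W\right)$, i.e.\ $S$ selects the directions of slowest variation rather than performing a rigid alignment. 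If you want to salvage your approach, you would need either to drop the orthogonality constraint (so that $S$ can genuinely contract the spread) or to reformulate, as the paper does, in terms of successive differences under a data-dependent normalization.
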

\begin{proof}
For an $r$-th sequence of the quantum computer, define $\Delta (\vec{\theta }_{r}^{*})$ and $\Delta \left(\vec{\varphi }_{r} \right)$ as
\begin{equation} \label{11)} 
\Delta (\vec{\theta }_{r}^{*})=\vec{\theta }_{r}^{*} -\vec{\theta }_{r+1}^{*}  
\end{equation} 
and
\begin{equation} \label{12)} 
\Delta \left(\vec{\varphi }_{r} \right)=\vec{\varphi }_{r} -\vec{\varphi }_{r+1} ,                                                        
\end{equation} 
respectively. These vectors formulate $\Delta \alpha $ and $\Delta \beta $ as
\begin{equation} \label{ZEqnNum837426} 
\Delta \alpha =[\Delta (\vec{\theta }_{1}^{*}),\ldots ,\Delta (\vec{\theta }_{R-1}^{*})] 
\end{equation} 
and
\begin{equation} \label{ZEqnNum242571} 
\Delta \beta =\left[\Delta \left(\vec{\varphi }_{1} \right),\ldots ,\Delta \left(\vec{\varphi }_{R-1} \right)\right] ,                                                 
\end{equation} 
respectively. Then, using equations \eqref{ZEqnNum837426} and \eqref{ZEqnNum242571} for the $r=1,\ldots ,R-1$ sequences, let $\chi $ be a sum defined as
\begin{equation} \label{ZEqnNum337228} 
\begin{split}
   \chi &=\sum\limits_{r}^{R-1}{\left\| \Delta \left( {{{\vec{\varphi }}}_{r}} \right) \right\|_{2}^{2}}\\&=\text{Tr}\left( \Delta \beta {{\left( \Delta \beta  \right)}^{T}} \right) \\ 
 & =\text{Tr}\left( {{S}^{T}}\left( \Delta \alpha {{\left( \Delta \alpha  \right)}^{T}} \right)S \right),  
\end{split}
\end{equation} 
where $\left\| \cdot \right\| _{2}^{2} $  is the squared ${\rm L}2$-norm, ${\rm Tr}\left(\cdot \right)$ is the trace operator, $\Delta \alpha $ is as given in equation \eqref{ZEqnNum837426}, and $\Delta \beta $ is as in equation \eqref{ZEqnNum242571}. 

For the $r$-th and $s$-th sequences, $s>r$,  with $\Delta \left(\vec{\varphi }_{r} \right)$ and $\Delta \left(\vec{\varphi }_{s} \right)$, let $\gamma _{rs} $ be defined as
\begin{equation} \label{ZEqnNum418305} 
\gamma _{rs} =\omega _{rs} \left\| \Delta \left(\vec{\varphi }_{r} \right)-\Delta \left(\vec{\varphi }_{s} \right)\right\| _{2}^{2} ,                                                 
\end{equation} 
where $\omega _{rs} $ is a weight coefficient defined as
\begin{equation} \label{ZEqnNum241414} 
\omega _{rs} =\left\{\begin{array}{l} {\exp \left(-{\textstyle\frac{\left\| \Delta (\vec{\theta }_{r}^{*})-\Delta \left(\vec{\theta }_{s}^{*} \right)\right\| ^{2} }{\zeta }} \right),{\rm if\; }\left(s-r\right)\le \kappa } \\ {0,{\rm \; otherwise}} \end{array}\right. , 
\end{equation} 
where $\kappa $ and $\zeta $ are nonzero parameters.

A sum is defined for the $r=1,\ldots ,R-1$ sequences of the quantum computer as
\begin{equation} \label{ZEqnNum122889} 
\tau =\sum _{r}^{R-1}\sum _{s}^{R-1}\gamma _{rs}   . 
\end{equation} 
At a particular $S$ in equations \eqref{ZEqnNum337228} and \eqref{ZEqnNum122889}, the stabilization of the optimal state of the quantum computer through the $R$ sequences can be reformulated via an objective function $F^{*} $, subject to a minimization as
\begin{equation} \label{ZEqnNum168524} 
\begin{split}
   {{F}^{*}}&=\arg \underset{S}{\mathop{\min }}\,\left( \chi +c\tau  \right) \\ 
 & =\arg \underset{S}{\mathop{\min }}\,\left( \text{Tr}\left( {{S}^{T}}\left( \Delta \alpha {{\left( \Delta \alpha  \right)}^{T}} \right)S \right)+c\tau  \right),  
\end{split}
\end{equation} 
where $c$ is a regularization constant \cite{ref23,ref24}. The $F^{*} $ objective function therefore stabilizes the optimal state via the minimization of  $\chi $, while the term $c\tau $ achieves stabilization between the sequences.

Then, let $W$ be the weight matrix formulated via the coefficients \eqref{ZEqnNum241414} with $W_{rs} =\omega _{rs} $, and let $\eta $ be a diagonal matrix of the weight coefficients \eqref{ZEqnNum241414} with 
\begin{equation} \label{20)} 
\eta _{rr} =\sum _{s}\omega _{rs}  ,                                                                    
\end{equation} 
such that 
\begin{equation} \label{21)} 
\left(\Delta \beta \right)^{T} \eta \Delta \beta =I.                                                                
\end{equation} 
Using $W$ and $\eta $, the $F^{*} $ objective function in equation \eqref{ZEqnNum168524} can be rewritten as
\begin{equation} \label{ZEqnNum782750} 
\begin{split}
   {{F}^{*}}&=\arg \underset{S}{\mathop{\min }}\,\left( \tfrac{1}{\Omega }\left( \text{Tr}\left( \Delta \beta {{\left( \Delta \beta  \right)}^{T}} \right)+c\text{Tr}\left( \Delta \beta \left( \eta -W \right){{\left( \Delta \beta  \right)}^{T}} \right) \right) \right) \\ 
 & =\arg \underset{S}{\mathop{\min }}\,\left( \tfrac{1}{\Omega }\left( \text{Tr}\left( \Delta \beta \left( I+c\left( \eta -W \right) \right){{\left( \Delta \beta  \right)}^{T}} \right) \right) \right) \\ 
 & =\arg \underset{S}{\mathop{\min }}\,\left( \tfrac{1}{\Omega }\left( \text{Tr}\left( \Delta \beta \sigma {{\left( \Delta \beta  \right)}^{T}} \right) \right) \right) \\ 
 & =\arg \underset{S}{\mathop{\min }}\,\left( \tfrac{1}{\Omega }\left( \text{Tr}\left( {{S}^{T}}\left( \Delta \alpha \sigma {{\left( \Delta \alpha  \right)}^{T}} \right)S \right) \right) \right),  
\end{split}
\end{equation} 
where $\sigma $ is as
\begin{equation} \label{ZEqnNum475164} 
\sigma =I+c\left(\eta -W\right),                                                          
\end{equation} 
and
\begin{equation} \label{24)} 
\Omega ={\rm Tr}\left(S^{T} \left(\Delta \alpha \eta \left(\Delta \alpha \right)^{T} \right)S\right).                                                   
\end{equation} 
At a particular $\Delta \alpha $ \eqref{ZEqnNum837426} and $\sigma $ \eqref{ZEqnNum475164}, the $S$ stabilizer matrix in equation \eqref{ZEqnNum782750} is evaluated via
\begin{equation} \label{ZEqnNum998235} 
\left(\Delta \alpha \sigma \left(\Delta \alpha \right)^{T} \right)S=\lambda \left(\Delta \alpha \eta \left(\Delta \alpha \right)^{T} \right)S, 
\end{equation} 
where $\lambda $ is a diagonal matrix of eigenvalues \cite{ref23,ref24}.

Algorithm A.1 (${\rm {\mathcal A}}_{S} $) gives the method for stabilizing the optimal state of the quantum computer.

 \setcounter{algocf}{0}
\begin{algo}
  \DontPrintSemicolon
\caption{Stabilization of the Optimal State of the Quantum Computer}
\textbf{Step 1}. Set the $R$ number of sequences for the quantum state stabilization. Formulate $\alpha $ \eqref{ZEqnNum579558} via $R$ gate parameter vectors  $\vec{\theta }_{r}^{*} $, $r=1,\ldots ,R$. 

\textbf{Step 2}. Set $\kappa $, and determine the $\omega _{rs} $ weight coefficients via equation \eqref{ZEqnNum241414} for all $r$ and $s$. 

\textbf{Step 3}. Set $W$, $\eta $, and $\sigma $ \eqref{ZEqnNum475164}.

\textbf{Step 4}. Compute the $S$ stabilizer matrix via equation \eqref{ZEqnNum998235}.

\textbf{Step 5}. Output $\beta =S^{T} \alpha $ via equation \eqref{ZEqnNum481709} for the stabilization of the optimal quantum state ${| \vec{\theta }^{*} \rangle} $ via the stable state ${\left| \vec{\varphi } \right\rangle} $ \eqref{ZEqnNum600779} through $R$ sequences of the quantum computer.
\end{algo} 

\end{proof}

\section{Learning the Stable Quantum State and Stability Class}
\label{sec4}
\begin{lemma}
The stabilized sequences of the quantum computer can be determined via unsupervised learning.
\end{lemma}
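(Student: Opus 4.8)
The plan is to show that the stabilization procedure of Theorem~1 is, by construction, a graph-based spectral embedding that operates entirely on the unlabeled data matrix $\alpha$, and therefore already constitutes an unsupervised learning algorithm. First I would observe that the weight coefficients $\omega_{rs}$ of \eqref{ZEqnNum241414} define a similarity graph over the $R$ running sequences: each sequence $r$ is a vertex, and the edge weight $\omega_{rs}$ measures the proximity of the parameter-difference vectors $\Delta(\vec{\theta}_r^*)$ and $\Delta(\vec{\theta}_s^*)$ through a Gaussian kernel truncated to a $\kappa$-neighborhood. Since these weights are computed solely from the observed gate parameters collected in $\alpha$ and require no external labels, target outputs, or ground-truth classes, the construction of the graph is intrinsically unsupervised. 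This identifies the relevant learning model as a manifold-embedding method of Laplacian-eigenmaps / locality-preserving type.

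Second, I would form the degree matrix $\eta$ and the combinatorial graph Laplacian $\eta - W$ appearing in $\sigma$ of \eqref{ZEqnNum475164}, and then recast the objective \eqref{ZEqnNum782750} as a Rayleigh-quotient functional. The numerator $\mathrm{Tr}\left(\Delta\beta\,\sigma\,(\Delta\beta)^{T}\right)$ penalizes variation of the stabilized representation both within each sequence (the $\chi$ term) and across graph-adjacent sequences (the $c\tau$ term), while the normalization $\Omega$ of \eqref{24)} fixes the scale through the constraint $(\Delta\beta)^{T}\eta\,\Delta\beta = I$. Minimizing this ratio over the stabilizer $S$ is the standard trace-ratio problem whose stationary points are characterized by the generalized eigenvalue problem \eqref{ZEqnNum998235}; hence $S$ is obtained as the eigenvectors associated with the smallest generalized eigenvalues of the pencil $\left(\Delta\alpha\,\sigma\,(\Delta\alpha)^{T},\,\Delta\alpha\,\eta\,(\Delta\alpha)^{T}\right)$, exactly as stated there.

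Third, I would conclude that $\beta = S^{T}\alpha$ of \eqref{ZEqnNum481709} is the learned low-dimensional embedding: the stabilized sequences $\vec{\varphi}_r$ are the images of the raw parameter vectors under the data-driven projection $S^{T}$. The entire pipeline --- graph construction from $\omega_{rs}$, Laplacian formation, and generalized eigen-decomposition --- takes only $\alpha$ as input and returns $\beta$ with no supervisory signal, which is precisely the defining property of unsupervised learning. Assembling the steps of Algorithm~A.1 into this reading establishes the Lemma.

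The main obstacle I anticipate is the normalization step: justifying that the constraint $(\Delta\beta)^{T}\eta\,\Delta\beta = I$ is simultaneously imposable with $S^{T}S = I$, and that the resulting generalized eigenproblem is well-posed, i.e. that $\Delta\alpha\,\eta\,(\Delta\alpha)^{T}$ is nonsingular on the relevant subspace or that one restricts to its range. Handling spectral degeneracies and verifying that the chosen eigenvectors actually minimize, rather than merely render stationary, the trace ratio will require the usual care from spectral-embedding theory, and is where I would concentrate the rigorous argument.
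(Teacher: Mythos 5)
Your proposal is correct as an argument for the Lemma's (informal) claim, but it takes a genuinely different route from the paper. You argue that the optimization of Theorem~1 is \emph{already} unsupervised: the similarity graph built from the weights $\omega_{rs}$ of \eqref{ZEqnNum241414}, the Laplacian $\eta-W$ inside $\sigma$, and the generalized eigenproblem \eqref{ZEqnNum998235} consume only the unlabeled matrix $\alpha$, so the projection $\beta=S^{T}\alpha$ is a label-free spectral embedding of Laplacian-eigenmaps type. The paper instead proves the Lemma by \emph{construction}: it exhibits an explicit learning pipeline (Procedure~1, $\mathcal{P}_{S}$) that first builds a training set $\mathcal{T}=(X_{1},\ldots,X_{q})$ of \emph{random} gate parameters, obtains $S$ from Algorithm~1, forms the learned linear map $Z=\mathcal{T}S$ and the centering bias $B=-Z^{T}\bar{\mathcal{T}}$, and then reads out, for each running sequence, the outputs $Y_{r}=Z^{T}\vec{\theta}_{r}^{*}+B$ together with per-gate statistical averages $\tilde{y}_{i}^{(r)}$ and their differences $\Delta\tilde{y}_{i}^{(r)}$ --- a slow-feature-analysis-style training loop rather than a bare re-interpretation of the eigenproblem. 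Your reading buys a cleaner conceptual identification of the method and correctly flags the well-posedness issues (compatibility of $(\Delta\beta)^{T}\eta\,\Delta\beta=I$ with $S^{T}S=I$, invertibility of $\Delta\alpha\,\eta\,(\Delta\alpha)^{T}$, and minimality versus mere stationarity) that the paper leaves unaddressed; the paper's version buys an operational recipe --- the training set, the centered readout, and the averaged outputs --- which is what the later classification algorithm $\mathcal{A}_{C}$ actually consumes. If you want your argument to serve as a drop-in replacement for the paper's proof, you would still need to supply the analogue of Steps~1--7 of Procedure~1, since the Lemma as used downstream refers to that concrete learned representation and not only to the existence of an unsupervised characterization of $S$.
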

\begin{proof}
Algorithm 1 with the objective function \eqref{ZEqnNum782750} can be used to formulate an unsupervised learning framework to find the stabilized unitaries. The steps are detailed in Procedure 1 (${\rm {\mathcal P}}_{S} $). 

 \setcounter{algocf}{0}
\begin{proced}
  \DontPrintSemicolon
\caption{Unsupervised Learning of Stable Quantum Evolutions}
\textbf{Step 1}. Construct a ${\rm {\mathcal T}}$ training set of random gate parameters of the $QG$-structure of the quantum computer, as
\begin{equation} \label{26)} 
{\rm {\mathcal T}}=\left(X_{1} ,\ldots ,X_{q} \right),                                                               
\end{equation} 
where $X_{i} $ is a $K$-dimensional random vector, $d\le R$, formulated as
\begin{equation} \label{27)} 
X_{i} =\left[\theta _{i,1} ,\ldots ,\theta _{i,K} \right],                                                                 
\end{equation} 
where $\theta _{i,j} $ is the gate parameter of $U_{j} $ in $X_{i} $, and $j$ is a random number. 

\textbf{Step 2}. Determine the $S$ stabilizer matrix via Algorithm 1.

\textbf{Step 3}. Compute $Z=\left[z_{1} ,\ldots ,z_{q} \right]$ as
\begin{equation} \label{28)} 
Z=\left(S^{T} {\rm {\mathcal T}}^{T} \right)^{T} ={\rm {\mathcal T}}S,                                                         
\end{equation} 
and set $B=\left[b_{1} ,\ldots ,b_{q} \right]$ as
\begin{equation} \label{29)} 
B=-Z^{T} \bar{{\rm {\mathcal T}}},                                                                   
\end{equation} 
where $\bar{{\rm {\mathcal T}}}$ is the mean of all training samples \cite{ref24}. 

\textbf{Step 4}. For a given $\vec{\theta }_{r}^{*} $ of an $r$-th sequence, learn output $Y_{r} $ as
\begin{equation} \label{30)} 
Y_{r} =Z^{T} \vec{\theta }_{r}^{*} +B=\left({\rm {\mathcal T}}S\right)^{T} \vec{\theta }_{r}^{*} -\left({\rm {\mathcal T}}S\right)^{T} \bar{{\rm {\mathcal T}}}. 
\end{equation} 

\textbf{Step 5}. For an $i$-th gate parameter $\theta _{r,i}^{*} $,  learn the $j$-th output $y_{i,j}^{\left(r\right)} $ as
\begin{equation} \label{31)} 
y_{i,j}^{\left(r\right)} =z_{j} \otimes \theta _{r,i}^{*} +b_{j} ,                                                       
\end{equation} 
from which a statistical average for a given $i$, $i=1,\ldots ,L$, is 
\begin{equation} \label{ZEqnNum935956} 
\tilde{y}_{i}^{\left(r\right)} ={\textstyle\frac{1}{q}} \sum _{j=1}^{q}|y_{i,j}^{(r)}| ,                                                         
\end{equation} 
with difference $\Delta \tilde{y}_{i}^{\left(r\right)} $ as
\begin{equation} \label{33)} 
\Delta \tilde{y}_{i}^{\left(r\right)} =|\tilde{y}_{i}^{(r)} -\tilde{y}_{i+1}^{(r)}|.                                                           
\end{equation} 

\textbf{Step 6}. Repeat step 5 for all $i$. 

\textbf{Step 7}. Repeat steps 1-5 for the $R$ sequences. 

\end{proced}

\end{proof}

\subsection{Learning the Sequence Stability of Stabilized Quantum States}
\begin{proposition}
The stability of a given sequence $\vec{\varphi }_{r} $ can be characterized via $K$ stability levels. The sequence $\vec{\varphi }_{r} $ can be classified into $K$ stability classes from set ${\rm {\mathcal C}}$, 
\begin{equation} \label{ZEqnNum924379} 
{\rm {\mathcal C}}=\left\{C_{1} ,\ldots ,C_{K} \right\},                                                              
\end{equation} 
where $C_{k} $, $k=1,\ldots ,K$, is the $k$-th stability class. 
\end{proposition}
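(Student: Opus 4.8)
The plan is to realize the classification algorithm $\mathcal{A}_C$ as an unsupervised partitioning of the $R$ stabilized sequences, built directly on the stability quantities that Procedure 1 already produces. First, for each sequence $\vec{\varphi}_r$ I would assemble a stability descriptor from the learned outputs of Steps 5--6, using the statistical averages $\tilde{y}_i^{(r)}$ of \eqref{ZEqnNum935956} together with the consecutive differences $\Delta \tilde{y}_i^{(r)}$ across the unitary index $i=1,\ldots,L$. These differences quantify how little the learned evolution varies from one unitary to the next, so a small $\Delta \tilde{y}_i^{(r)}$ signals a more stable sequence; collecting them into a single vector yields one feature point $x_r \in \mathbb{R}^m$ per running sequence, with $m$ determined by how many of the $\tilde y$ and $\Delta \tilde y$ components are retained.

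Next, I would define the $K$ stability levels as the centroids $\mu_1,\ldots,\mu_K$ of a $K$-means partition of the point set $\{x_1,\ldots,x_R\}$, so that the classes in $\mathcal{C}$ of \eqref{ZEqnNum924379} arise by minimizing the within-class dispersion $\sum_{k=1}^{K}\sum_{x_r \in C_k}\|x_r-\mu_k\|_2^2$ over all assignments of the $R$ sequences to $K$ groups, where $\mu_k$ is the mean of the points placed in $C_k$. Since a finite point set admits only finitely many partitions into $K$ nonempty blocks, a minimizing partition exists, which is exactly the assertion that $\mathcal{C}=\{C_1,\ldots,C_K\}$ is well defined with each $C_k$ the $k$-th stability class and $\mu_k$ the $k$-th stability level. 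The classification rule is then the nearest-level (Voronoi) rule: $\vec{\varphi}_r$ is assigned to the class $C_k$ whose centroid $\mu_k$ is closest in the $\mathrm{L}2$ metric, which gives every sequence a unique stability class. Constructively, $\mathcal{A}_C$ is the Lloyd iteration that alternates the assignment and centroid-update steps; since each step does not increase the dispersion and there are finitely many assignments, the iteration terminates at a fixed point.

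The main obstacle is not existence of the partition, which is immediate, but rather faithfulness and identifiability: I must argue that the chosen descriptor $x_r$ genuinely encodes the physical stability of $\vec{\varphi}_r$, so that sequences with small cumulative $\sum_i \Delta \tilde{y}_i^{(r)}$ land in the high-stability classes, and that the number of levels $K$ is itself meaningful rather than arbitrary. I would address the former by showing the descriptor is monotone in the stabilization penalty driving $F^{*}$ in Theorem 1, so that minimizing the clustering dispersion is consistent with the stabilization objective, and the latter by a model-selection criterion on the dispersion-versus-$K$ curve. A secondary difficulty is that the $K$-means objective is non-convex, so Lloyd's algorithm only guarantees a local minimum; I would note that multiple restarts, or a spectral relaxation analogous to the eigenproblem \eqref{ZEqnNum998235} used for $S$, mitigate this while keeping the procedure entirely unsupervised, as no external labels enter at any stage.
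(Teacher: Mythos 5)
Your argument is correct as a demonstration of the (rather weak) claim actually asserted by the Proposition --- that a partition of the stabilized sequences into $K$ stability classes ${\rm {\mathcal C}}=\left\{C_{1},\ldots ,C_{K}\right\}$ \eqref{ZEqnNum924379} can be defined and computed unsupervised --- but it takes a genuinely different route from the paper. You build a per-sequence feature vector from the quantities $\tilde{y}_{i}^{\left(r\right)}$ \eqref{ZEqnNum935956} and $\Delta \tilde{y}_{i}^{\left(r\right)}$ that Procedure~1 already outputs, and then obtain the classes \emph{a posteriori} as the blocks of a $K$-means partition of the $R$ points, with existence guaranteed by finiteness of the assignment space and termination by the monotonicity of Lloyd's iteration. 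The paper instead treats the $K$ classes as given \emph{a priori} and supplies the classification machinery in the proof of Theorem~2: a per-gate-parameter probabilistic classifier $f_{k}^{{\rm {\mathcal C}}}:{\rm {\mathcal S}}\to \left[0,1\right]$ \eqref{ZEqnNum469865} with $\sum _{k}f_{k}^{{\rm {\mathcal C}}}\left(\varphi _{r,i}\right)=1$, weighted into the nonlinear maps $\phi _{k}\left(\vec{\varphi }_{r}\right)$ \eqref{ZEqnNum396957}, whose pairwise correlations are measured by a Gaussian kernel in an RKHS via $\rho \left(\phi _{k},\phi _{l}\right)$ \eqref{ZEqnNum540542}, with the final class chosen by maximization in Algorithm~2. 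Your approach buys concrete existence and termination guarantees and is entirely elementary; the paper's buys a finer, gate-parameter-level probabilistic assignment and an explicit inter-class correlation structure, at the cost of presupposing a trained classifier. Two caveats on your side: the faithfulness issue you flag (that small $\sum _{i}\Delta \tilde{y}_{i}^{\left(r\right)}$ really corresponds to physical stability, and that $K$ is not arbitrary) is left as a promissory note rather than proved --- though the paper does not resolve the analogous issue for $f_{k}^{{\rm {\mathcal C}}}$ either --- and your clustering requires $R\ge K$ sequences and may produce empty classes, whereas the paper's per-class scores $\phi _{k}\left(\vec{\varphi }_{r}\right)$ are defined for every $k$ regardless of $R$.
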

\begin{theorem}
The $C\left(\vec{\varphi }_{r} \right)$, $r=1,\ldots ,R$ stability class of a $\vec{\varphi }_{r} $ stabilized sequence, $\vec{\varphi }_{r} =\varphi _{r,1} ,\ldots ,\varphi _{r,L} $, of the quantum computer can be learned via $\phi _{k} \left(\vec{\varphi }_{r} \right)=\left(\nu _{k} \left(\vec{\varphi }_{r} \right)\right)^{T} f_{k}^{{\rm {\mathcal C}}} \left(\vec{\varphi }_{r} \right)$ quantities in the high-dimensional Hilbert space ${\rm {\mathcal H}}$, where $\nu _{k} \left(\varphi _{r,i} \right)={\textstyle\frac{1}{\pi }} \left(\varphi _{r,i} \right),$ and $f_{k}^{{\rm {\mathcal C}}} \left(\varphi _{r,i} \right)\in \left[0,1\right]$ is a probability.
\end{theorem}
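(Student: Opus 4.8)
The plan is to argue constructively, exhibiting an explicit unsupervised classifier built from the $\phi_k$ quantities and then checking that it is well posed. First I would fix the feature representation. Since each stabilized gate parameter satisfies $\varphi_{r,i}\in[0,\pi]$ by \eqref{5)}, the map $\nu_k(\varphi_{r,i})=\tfrac{1}{\pi}\varphi_{r,i}$ sends the sequence $\vec{\varphi}_r$ into the unit hypercube $[0,1]^L$, which I embed into the high-dimensional Hilbert space $\mathcal{H}$ through a feature (kernel) map so that inner products are well defined. The normalized vector $\nu_k(\vec{\varphi}_r)$ then plays the role of a magnitude-weighting of the sequence against the candidate class $C_k$ from the set $\mathcal{C}$ in \eqref{ZEqnNum924379}.

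Next I would define, for each class $C_k$, a membership function $f_k^{\mathcal{C}}$ that is a genuine probability. The natural choice is a soft assignment of the form $f_k^{\mathcal{C}}(\varphi_{r,i})=\exp(-d_k^2)/\sum_{l=1}^K\exp(-d_l^2)$, where $d_k$ is the distance in $\mathcal{H}$ between the embedded parameter and a class representative $\mu_k$; this guarantees $f_k^{\mathcal{C}}\in[0,1]$ and $\sum_k f_k^{\mathcal{C}}=1$, so that $f_k^{\mathcal{C}}(\vec{\varphi}_r)$ is a legitimate distribution over the gate coordinates. Forming $\phi_k(\vec{\varphi}_r)=(\nu_k(\vec{\varphi}_r))^T f_k^{\mathcal{C}}(\vec{\varphi}_r)=\sum_{i=1}^L \nu_k(\varphi_{r,i})\,f_k^{\mathcal{C}}(\varphi_{r,i})$ then yields a bounded scalar membership score of $\vec{\varphi}_r$ in $C_k$, and the stability class is read off by the decision rule $C(\vec{\varphi}_r)=C_{k^*}$ with $k^*=\arg\max_{k}\phi_k(\vec{\varphi}_r)$.

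Then I would show that the $\phi_k$ and the representatives $\mu_k$ can be produced without labels, i.e.\ by unsupervised learning. I would run an expectation--maximization / fuzzy-$c$-means style iteration: with the current $\mu_k$ compute the memberships $f_k^{\mathcal{C}}$, and then re-estimate each $\mu_k$ as the membership-weighted mean of the embedded training sequences supplied by Procedure 1. Because the gate parameters lie in the compact cube $[0,\pi]^L$ and the memberships lie in $[0,1]$, every $\phi_k$ is bounded, so the associated clustering objective is bounded below and decreases monotonically along the iteration; this yields convergence to a fixed family of $K$ classes and establishes that the stability class is learnable.

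The main obstacle I anticipate is the second step: arranging $f_k^{\mathcal{C}}$ to be simultaneously a valid probability and discriminative enough that the $\arg\max$ over $\phi_k$ reproduces the $K$ stability levels asserted in the Proposition. In particular one must argue that sequences with nearly equal inter-sequence differences $\Delta\tilde{y}_i^{(r)}$ from \eqref{33)} are mapped to the same class, so that the learned partition is consistent with the stabilization metric underlying Theorem 1 rather than an arbitrary clustering; tying the distance $d_k$ in $\mathcal{H}$ to those differences is the delicate point.
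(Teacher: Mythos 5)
Your construction is sound as far as it goes, but it takes a genuinely different route from the paper. The paper keeps $\phi _{k} \left(\vec{\varphi }_{r} \right)$ as a \emph{vector} of componentwise products $\left[\nu _{k} \left(\varphi _{r,1} \right)f_{k}^{{\rm {\mathcal C}}} \left(\varphi _{r,1} \right),\ldots ,\nu _{k} \left(\varphi _{r,L} \right)f_{k}^{{\rm {\mathcal C}}} \left(\varphi _{r,L} \right)\right]$ and makes the Hilbert space ${\rm {\mathcal H}}$ enter through a \emph{cross-class} kernel correlation $\rho \left(\phi _{k} \left(\vec{\varphi }_{r} \right),\phi _{l} \left(\vec{\varphi }_{r} \right)\right)={\rm {\mathcal K}}\left(\phi _{k} ,\phi _{l} \right)$ with a Gaussian kernel on the ${\rm L}2$-distances of those components; the final decision rule in Algorithm 2 is not a pure argmax but the hybrid $C\left(\vec{\varphi }_{r} \right)=\xi \left(\vec{\varphi }_{r} \right)C_{p} +\ell _{k} \left(\vec{\varphi }_{r} \right)C_{q} $ combining the maximal $\phi _{k} $ with the maximal inter-class correlation $\ell _{k} =\max _{l} \rho $. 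You instead collapse $\phi _{k} $ to the scalar $\sum _{i} \nu _{k} f_{k}^{{\rm {\mathcal C}}} $, use ${\rm {\mathcal H}}$ only to define distances $d_{k} $ to class representatives, and decide by $\arg \max _{k} \phi _{k} $. What your approach buys is concreteness: an explicit softmax form guaranteeing $f_{k}^{{\rm {\mathcal C}}} \in \left[0,1\right]$ and $\sum _{k} f_{k}^{{\rm {\mathcal C}}} =1$ (which the paper merely posits as a ``trained'' classifier), plus an EM-style iteration with a boundedness/monotonicity convergence argument that the paper does not supply. What you lose is the paper's central device, the correlation analysis between the non-linear maps of \emph{different} classes in the RKHS, which is precisely how the theorem's phrase ``in the high-dimensional Hilbert space ${\rm {\mathcal H}}$'' is cashed out there. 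One further caution: the paper asserts both $\nu _{k} \left(\varphi _{r,i} \right)=\varphi _{r,i} /\pi $ and $\sum _{i=1}^{L}\nu _{k} \left(\varphi _{r,i} \right)=1$, which cannot hold simultaneously in general; your version quietly drops the sum-to-one condition, which is defensible but should be stated, since the boundedness of your $\phi _{k} $ by $1$ depends on which normalization is in force.
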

\begin{proof}
Since the gate parameters are stabilized, the gate parameters $\vec{\varphi }_{r} $ and $\vec{\varphi }_{r+1} $ of the $r$-th and $\left(r+1\right)$-th sequences must be correlated in the stable system state ${\left| \vec{\varphi } \right\rangle} $ \eqref{ZEqnNum600779} of the quantum computer. 

Let $\beta $ from equation \eqref{ZEqnNum728789} be the $R$ stabilized sequences, where $\vec{\varphi }_{r} $ is the stabilized gate parameter vector of an $r$-th sequence of the quantum computer, and let ${\rm {\mathcal S}}$ be the set of all sequences of gate parameters as
\begin{equation} \label{35)} 
{\rm {\mathcal S}}=\bigcup _{\vec{\varphi }_{r} \in \beta }\left\{\left. \varphi _{r,i} \right|\varphi _{r,i} \in \vec{\varphi }_{r} \right\} .                                                     
\end{equation} 
For an $k$-th stabilization class $C_{k} $,  a probabilistic classifier function $f_{k}^{{\rm {\mathcal C}}} $ \cite{ref23,ref28} can be defined as
\begin{equation} \label{ZEqnNum469865} 
f_{k}^{{\rm {\mathcal C}}} :{\rm {\mathcal S}}\to \left[0,1\right].                                                                
\end{equation} 
The goal is to learn a function that maps any $\vec{\varphi }_{r} $ sequence to the correct stability class. Applying equation \eqref{ZEqnNum469865} on a given sequence $\vec{\varphi }_{r} $, i.e., $f_{k}^{{\rm {\mathcal C}}} \left(\vec{\varphi }_{r} \right)$ therefore maps $\vec{\varphi }_{r} $ to a given stability class via the classification of each $L$ gate parameter of the sequence. 

Thus, an $i$-th stabilized gate parameter $\varphi _{r,i} $ of an $r$-th sequence $\vec{\varphi }_{r} $  can be also classified into a particular stabilization class from ${\rm {\mathcal C}}$ \eqref{ZEqnNum924379}. The $f_{k}^{{\rm {\mathcal C}}} \left(\varphi _{r,i} \right)\in \left[0,1\right]$, $k=1,\ldots ,K$, classifier \eqref{ZEqnNum469865} is trained to classify \cite{ref28} each of the $\varphi _{r,i} $ gate parameters of $\vec{\varphi }_{r} $, $i=1,\ldots ,L$ via outputting a corresponding probability that $\varphi _{r,i} $ belongs to a given $C_{k} $ class. For a particular $\varphi _{r,i} $, the sum of the probabilities yields
\begin{equation} \label{37)} 
\sum _{k=1}^{K}f_{k}^{{\rm {\mathcal C}}} \left(\varphi _{r,i} \right) =1, 
\end{equation} 
for all $i$.

Then, let $\nu _{k} \left(\varphi _{r,i} \right)\ge 0$ be a weight parameter associated with a particular $\varphi _{r,i} $ and $k$-th class $C_{k} $, defined as
\begin{equation} \label{38)} 
\nu _{k} \left(\varphi _{r,i} \right)={\textstyle\frac{1}{\pi }} \left(\varphi _{r,i} \right), 
\end{equation} 
which normalizes $\varphi _{r,i} $ into the range of $\left[0,1\right]$, $\nu _{k} \left(\varphi _{r,i} \right)\in \left[0,1\right]$. 

For an $r$-th sequence $\vec{\varphi }_{r} $, a $\nu _{k} \left(\vec{\varphi }_{r} \right)$ collection can be defined as 
\begin{equation} \label{ZEqnNum624850} 
\nu _{k} \left(\vec{\varphi }_{r} \right)=\left[\nu _{k} \left(\varphi _{r,1} \right),\ldots ,\nu _{k} \left(\varphi _{r,L} \right)\right],                                          
\end{equation} 
where $\sum _{i=1}^{L}\nu _{k} \left(\varphi _{r,i} \right) =1$.

From equations \eqref{ZEqnNum469865} and \eqref{ZEqnNum624850}, the $\phi _{k} \left(\vec{\varphi }_{r} \right)$ evolution of a particular sequence $\vec{\varphi }_{r} $ with respect to a $k$-th class $C_{k} $  is defined as 
\begin{equation} \label{ZEqnNum396957} 
\begin{split}
   {{\phi }_{k}}\left( {{{\vec{\varphi }}}_{r}} \right)&={{\left( {{\nu }_{k}}\left( {{{\vec{\varphi }}}_{r}} \right) \right)}^{T}}f_{k}^{\mathcal{C}}\left( {{{\vec{\varphi }}}_{r}} \right) \\ 
 & =\left[ {{\nu }_{k}}\left( {{\varphi }_{r,1}} \right)f_{k}^{\mathcal{C}}\left( {{\varphi }_{r,1}} \right),\ldots ,{{\nu }_{k}}\left( {{\varphi }_{r,L}} \right)f_{k}^{\mathcal{C}}\left( {{\varphi }_{r,L}} \right) \right].  
\end{split}
\end{equation} 
Since the $\phi _{k} \left(\vec{\varphi }_{r} \right)$ term \eqref{ZEqnNum396957} is a non-linear map, the problem of correlation analysis \cite{ref23,ref28} between the inner products of non-linear functions $\phi _{k} \left(\vec{\varphi }_{r} \right)$ and $\phi _{l} \left(\vec{\varphi }_{r} \right)$ can be reformulated via a kernel machine ${\rm {\mathcal K}}$ \cite{ref25,ref26,ref27} as ${\rm {\mathcal K}}\left(\phi _{k} \left(\vec{\varphi }_{r} \right),\phi _{l} \left(\vec{\varphi }_{r} \right)\right)$, which yields a distance in a high-dimensional Hilbert space ${\rm {\mathcal H}}$. This distance in ${\rm {\mathcal H}}$ can therefore be used as a metric to describe the correlation between $\phi _{k} \left(\vec{\varphi }_{r} \right)$ and $\phi _{l} \left(\vec{\varphi }_{r} \right)$.

Let ${\rm {\mathcal X}}$ be the input space and let ${\rm {\mathcal K}}$ be an arbitrary kernel machine, defined for a given $x,y\in {\rm {\mathcal X}}$ via the kernel function
\begin{equation} \label{ZEqnNum980152} 
{\rm {\mathcal K}}\left(x,y\right)=\Gamma \left(x\right)^{T} \Gamma \left(y\right),                                                           
\end{equation} 
where
\begin{equation} \label{ZEqnNum942922} 
\Gamma :{\rm {\mathcal X}}\to {\rm {\mathcal H}} 
\end{equation} 
is a nonlinear map from ${\rm {\mathcal X}}$ to the high-dimensional reproducing kernel Hilbert space (RKHS) ${\rm {\mathcal H}}$ associated with ${\rm {\mathcal K}}$. Without a loss of generality, $\dim \left({\rm {\mathcal H}}\right){\rm \gg }\dim \left({\rm {\mathcal X}}\right)$, and we assume that the map $\Gamma $ in equation \eqref{ZEqnNum942922} has no inverse. 

Then, for a $\phi _{k} \left(\vec{\varphi }_{r} \right)$ and $\phi _{l} \left(\vec{\varphi }_{r} \right)$, let $\rho \left(\phi _{k} \left(\vec{\varphi }_{r} \right),\phi _{l} \left(\vec{\varphi }_{r} \right)\right)\to {\rm {\mathcal H}}$ be the correlation identifier, as
\begin{equation} \label{ZEqnNum540542} 
\begin{split}
&   \rho \left( {{\phi }_{k}}\left( {{{\vec{\varphi }}}_{r}} \right),{{\phi }_{l}}\left( {{{\vec{\varphi }}}_{r}} \right) \right)\\&=\mathcal{K}\left( {{\phi }_{k}}\left( {{{\vec{\varphi }}}_{r}} \right),{{\phi }_{l}}\left( {{{\vec{\varphi }}}_{r}} \right) \right) \\ 
 & =\sum\limits_{i=1}^{L}{\mathcal{K}\left( {{\nu }_{k}}\left( {{\varphi }_{r,i}} \right)f_{k}^{\mathcal{C}}\left( {{\varphi }_{r,i}} \right),{{\nu }_{l}}\left( {{\varphi }_{r,i}} \right)f_{l}^{\mathcal{C}}\left( {{\varphi }_{r,i}} \right) \right)}.  
\end{split}
\end{equation} 
Assuming that ${\rm {\mathcal K}}$ is a Gaussian kernel \cite{ref25,ref26,ref27} in equation \eqref{ZEqnNum540542}, for an $i$-th gate parameter the kernel function is 
\begin{equation} \label{ZEqnNum528545} 
\begin{split}
  & \mathcal{K}\left( {{\nu }_{k}}\left( {{\varphi }_{r,i}} \right)f_{k}^{\mathcal{C}}\left( {{\varphi }_{r,i}} \right),{{\nu }_{l}}\left( {{\varphi }_{r,i}} \right)f_{l}^{\mathcal{C}}\left( {{\varphi }_{r,i}} \right) \right) \\ 
 & =\exp \left( -\tfrac{1}{c}{{f}_{d}}\left( {{\nu }_{k}}\left( {{\varphi }_{r,i}} \right)f_{k}^{\mathcal{C}}\left( {{\varphi }_{r,i}} \right),{{\nu }_{l}}\left( {{\varphi }_{r,i}} \right)f_{l}^{\mathcal{C}}\left( {{\varphi }_{r,i}} \right) \right) \right),  
\end{split}
\end{equation} 
where $c=2\sigma ^{2} $, while $f_{d} \left(\cdot \right)$ yields the ${\rm L}2$-distance in ${\rm {\mathcal H}}$,
\begin{equation} \label{45)} 
\begin{split}
&f_{d} \left(\nu _{k} \left(\varphi _{r,i} \right)f_{k}^{{\rm {\mathcal C}}} \left(\varphi _{r,i} \right),\nu _{l} \left(\varphi _{r,i} \right)f_{l}^{{\rm {\mathcal C}}} \left(\varphi _{r,i} \right)\right)\\=&\left\| \nu _{k} \left(\varphi _{r,i} \right)f_{k}^{{\rm {\mathcal C}}} \left(\varphi _{r,i} \right)-\nu _{l} \left(\varphi _{r,i} \right)f_{l}^{{\rm {\mathcal C}}} \left(\varphi _{r,i} \right)\right\| _{2}^{2} .   
\end{split}
\end{equation} 
For a given $\phi _{k} \left(\vec{\varphi }_{r} \right)$ and $\phi _{l} \left(\vec{\varphi }_{r} \right)$, an $f_{A} \left(\phi _{c} \left(\vec{\varphi }_{r} \right),\phi _{c} \left(\vec{\varphi }_{r} \right)\right)\to \delta _{+}^{K} $ average is yielded as
\begin{equation} \label{46)} 
\begin{split}
& \varsigma \left( {{\phi }_{c}}\left( {{{\vec{\varphi }}}_{r}} \right),{{\phi }_{c}}\left( {{{\vec{\varphi }}}_{r}} \right) \right)={{\left( {{\phi }_{c}}\left( {{{\vec{\varphi }}}_{r}} \right) \right)}^{T}}{{\phi }_{c}}\left( {{{\vec{\varphi }}}_{r}} \right) \\ 
  & =\sum\limits_{i=1}^{L}{\nu _{c}^{2}\left( {{\varphi }_{r,i}} \right){{\left( f_{c}^{\mathcal{C}}\left( {{\varphi }_{r,i}} \right) \right)}^{2}}}\le \sum\limits_{i=1}^{L}{{{\nu }_{c}}\left( {{\varphi }_{r,i}} \right)f_{c}^{\mathcal{C}}\left( {{\varphi }_{r,i}} \right)},  
\end{split}
\end{equation} 
where $\delta _{+}^{K} $ refers to the space of $K\times K$ symmetric positive semi-definite matrices \cite{ref26,ref27,ref28}, while the $\iota $ inner products of $\phi _{k} \left(\vec{\varphi }_{r} \right)$ and $\phi _{l} \left(\vec{\varphi }_{r} \right)$ are represented in $\delta _{+}^{K} $ via $\iota \left(\phi _{k} \left(\vec{\varphi }_{r} \right),\phi _{l} \left(\vec{\varphi }_{r} \right)\right)\to \delta _{+}^{K} $ as 
\begin{equation} \label{47)} 
\begin{split}
 &  \iota \left( {{\phi }_{k}}\left( {{{\vec{\varphi }}}_{r}} \right),{{\phi }_{l}}\left( {{{\vec{\varphi }}}_{r}} \right) \right)={{\left( {{\phi }_{k}}\left( {{{\vec{\varphi }}}_{r}} \right) \right)}^{T}}{{\phi }_{l}}\left( {{{\vec{\varphi }}}_{r}} \right) \\ 
 & =\sum\limits_{i=1}^{L}{{{\nu }_{k}}\left( {{\varphi }_{r,i}} \right){{\nu }_{l}}\left( {{\varphi }_{r,i}} \right)\left( f_{k}^{\mathcal{C}}\left( {{\varphi }_{r,i}} \right) \right)}\left( f_{l}^{\mathcal{C}}\left( {{\varphi }_{r,i}} \right) \right).  
\end{split}
\end{equation} 
The $\vec{\varphi }_{r} $ sequence is classified into a given class from set ${\rm {\mathcal C}}$, as given in Algorithm 2 (${\rm {\mathcal A}}_{C} $).

 \setcounter{algocf}{1}
\begin{algo}
  \DontPrintSemicolon
\caption{Learning the Classification of the Stabilized Quantum States of the Quantum Computer}

\textbf{Step 1}. Let $\vec{\varphi }_{r} $ be the $r$-th sequence of the quantum computer, with the $L$ stabilized  gate parameters $\varphi _{r,1} ,\ldots ,\varphi _{r,L} $.

\textbf{Step 2}. Define set ${\rm {\mathcal C}}$ of the $K$ stability classes via \eqref{ZEqnNum924379}.

\textbf{Step 3}. Select $k$ that identifies $k$-th stability class $C_{k} $, and learn function $\rho \left(\phi _{k} \left(\vec{\varphi }_{r} \right),\phi _{l} \left(\vec{\varphi }_{r} \right)\right)$ \eqref{ZEqnNum540542} using the ${\rm {\mathcal K}}$ kernel machine \eqref{ZEqnNum980152} for all $l$, $l\ne k$.  

\textbf{Step 4}. Determine $\ell _{k} \left(\vec{\varphi }_{r} \right)=\mathop{\max }\limits_{\forall l} \rho \left(\phi _{k} \left(\vec{\varphi }_{r} \right),\phi _{l} \left(\vec{\varphi }_{r} \right)\right)$.

\textbf{Step 5}. Repeat steps 3-4 for all $k$.

\textbf{Step 6}. Determine $\xi \left(\vec{\varphi }_{r} \right)=\mathop{\max }\limits_{\forall k} \phi _{k} \left(\vec{\varphi }_{r} \right)$. 

\textbf{Step 7}. Classify $\vec{\varphi }_{r} $ into stability class $C\left(\vec{\varphi }_{r} \right)$ via the set ${\rm {\mathcal C}}$ as
\[C\left(\vec{\varphi }_{r} \right)=\xi \left(\vec{\varphi }_{r} \right)C_{p} +\ell _{k} \left(\vec{\varphi }_{r} \right)C_{q} ,\] 
where $p$ indexes the maximal $\phi _{k} \left(\vec{\varphi }_{r} \right)$ in $\xi \left(\vec{\varphi }_{r} \right)$, while $q$ indexes the maximal $\phi _{l} \left(\vec{\varphi }_{r} \right)$ in $\xi \left(\vec{\varphi }_{r} \right)$.

\textbf{Step 8}. Repeat steps 1-8 for all $r$.

\textbf{Step 9}. Output the stability classes $C\left(\beta \right)=\left[C\left(\vec{\varphi }_{1} \right),\ldots ,C\left(\vec{\varphi }_{R} \right)\right]^{T} $ of the stabilized quantum states ${\left| \vec{\varphi }_{r}  \right\rangle} $, $r=1,\ldots ,R$ of the quantum computer. 

\end{algo} 

\end{proof}

\section{Numerical Evaluation}
\label{nume}
\subsection{System Stability}
Let ${\left| \phi  \right\rangle} $ be the stabilized system state of the quantum computer formulated by $R$ output systems, ${\left| \vec{\varphi }_{r}  \right\rangle} $, $r=1,\ldots ,R$, as 
\begin{equation} \label{ZEqnNum703387} 
{\left| \phi  \right\rangle} ={\left| \vec{\varphi }_{1}  \right\rangle} \otimes \cdots \otimes {\left| \vec{\varphi }_{R}  \right\rangle} ,   
\end{equation} 
with gate parameters $\beta $, as given in \eqref{ZEqnNum728789}.

Then, let ${\left| \phi ^{*}  \right\rangle} $ be a target stabilized system of the quantum computer, as
\begin{equation} \label{2)} 
{\left| \phi ^{*}  \right\rangle} ={\left| \vec{\varphi }_{1}^{*}  \right\rangle} \otimes \cdots \otimes {\left| \vec{\varphi }_{R}^{*}  \right\rangle} ,   
\end{equation} 
with target gate parameters $\beta ^{*} $, as
\begin{equation} \label{3)} 
\beta ^{*} =\left[\vec{\varphi }_{1}^{*} ,\ldots ,\vec{\varphi }_{R}^{*} \right] 
\end{equation} 
where $\vec{\varphi }_{r}^{*} =\left[\varphi _{r,1}^{*} ,\ldots ,\varphi _{r,L}^{*} \right]^{T} $. 

Then, let $\left[\beta \right]_{rl}$ refer to the gate parameter $\varphi _{r,l} $ of an $l$-th unitary of an $r$-th running sequence of the quantum computer, $l=1,\ldots ,L$, $r=1,\ldots ,R$, in the state ${\left| \phi  \right\rangle} $, and let $\left[\beta ^{*} \right]_{rl} $ identify the target gate parameter $\varphi _{r,l}^{*} $ in state ${\left| \phi ^{*}  \right\rangle} $.

Then, let $D\left(\left. \beta \right\| \beta ^{*} \right)$ be the relative entropy between $\beta $ and $\beta ^{*} $, as
\begin{equation} \label{ZEqnNum686826} 
D\left(\left. \beta \right\| \beta ^{*} \right)=\sum _{r,l}\left(\left[\beta \right]_{rl} \log {\textstyle\frac{\left[\beta \right]_{rl} }{\left[\beta ^{*} \right]_{rl} }} +\left[\beta ^{*} \right]_{rl} -\left[\beta \right]_{rl} \right).  
\end{equation} 
where $D\left(\left. \beta \right\| \beta ^{*} \right)\ge 0$, and let $f_{D\left(\left. \beta \right\| \beta ^{*} \right)} \left(r\right)\ge 0$ be a function that returns the value of the relative entropy function for an $r$-th running sequence as
\begin{equation} \label{ZEqnNum729266} 
f_{D\left(\left. \beta \right\| \beta ^{*} \right)} \left(r\right)=\sum _{l}\left(\left[\beta \right]_{rl} \log {\textstyle\frac{\left[\beta \right]_{rl} }{\left[\beta ^{*} \right]_{rl} }} +\left[\beta ^{*} \right]_{rl} -\left[\beta \right]_{rl} \right).  
\end{equation} 
Let $f_{D\left(\left. \beta \right\| \beta ^{*} \right)}^{*} \left(r\right)$ be a target value for function \eqref{ZEqnNum729266}, and let $\Delta \left(f_{D\left(\left. \beta \right\| \beta ^{*} \right)} \left(r\right)\right)$  be the difference \cite{slow} between $f_{D\left(\left. \beta \right\| \beta ^{*} \right)}^{*} \left(r\right)$ and \eqref{ZEqnNum729266}, as
\begin{equation} \label{ZEqnNum982950} 
\Delta \left( {{f}_{D\left( \left. \beta  \right\|{{\beta }^{*}} \right)}}\left( r \right) \right)=\tfrac{1}{R}\int\limits_{1}^{R}{\partial _{D\left( \left. \beta  \right\|{{\beta }^{*}} \right)}^{2}\left( r \right)dr}, 
\end{equation} 
where ${{\partial }_{D\left( \left. \beta  \right\|{{\beta }^{*}} \right)}}\left( r \right)$ is the derivative of $f_{D\left(\left. \beta \right\| \beta ^{*} \right)} \left(r\right)$. 

Using \eqref{ZEqnNum982950}, we define a stability parameter $\delta $ to quantify the variation of the ${\left| \vec{\varphi }_{r}  \right\rangle} $ stabilized system state of the $r$-th running sequence of the quantum computer, as
\begin{equation} \label{ZEqnNum826820} 
\delta \left(r\right)=\left({\textstyle\frac{R}{2\pi }} \sqrt{\Delta \left(f_{D\left(\left. \beta \right\| \beta ^{*} \right)} \left(r\right)\right)} \right)^{-1} .  
\end{equation} 
For analytical purposes, let us assume that $f_{D\left(\left. \beta \right\| \beta ^{*} \right)} \left(r\right)$ oscillates between a minimal value $\gamma \ge 0$, and a maximal value $\gamma \le \lambda \le 1$, defined as
\begin{equation} \label{8)} 
\gamma =\arg \mathop{\min }\limits_{\forall r} \left(f_{D\left(\left. \beta \right\| \beta ^{*} \right)} \left(r\right)\right),   
\end{equation} 
and
\begin{equation} \label{9)} 
\lambda =\arg \mathop{\max }\limits_{\forall r} \left(f_{D\left(\left. \beta \right\| \beta ^{*} \right)} \left(r\right)\right),   
\end{equation} 
therefore $f_{D\left(\left. \beta \right\| \beta ^{*} \right)} \left(r\right)$ can be rewritten as
\begin{equation} \label{ZEqnNum745453} 
f_{D\left(\left. \beta \right\| \beta ^{*} \right)} \left(r\right)=c\sin \left(N2\pi {\textstyle\frac{r}{R}} \right)+{{\mathbb{E}}}\left(D\left(\left. \beta \right\| \beta ^{*} \right)\right),   
\end{equation} 
where $c$ is a constant, set as
\begin{equation} \label{11)} 
c={\textstyle\frac{1}{2}} \left(\lambda -\gamma \right),   
\end{equation} 
while $0\le {{\mathbb{E}}}\left(D\left(\left. \beta \right\| \beta ^{*} \right)\right)\le 1$ is an expected value of \eqref{ZEqnNum686826}, set as
\begin{equation} \label{12)} 
{{\mathbb{E}}}\left(D\left(\left. \beta \right\| \beta ^{*} \right)\right)=c+\gamma ,   
\end{equation} 
while $N$ is the number of oscillations. 

Therefore, \eqref{ZEqnNum982950} can be evaluated as
\begin{equation} \label{ZEqnNum728565} 
\begin{split}
   \Delta \left( {{f}_{D\left( \left. \beta  \right\|{{\beta }^{*}} \right)}}\left( r \right) \right)&=\tfrac{1}{R}\int\limits_{1}^{R}{\tfrac{2{{N}^{2}}4{{\pi }^{2}}}{{{R}^{2}}}{{\cos }^{2}}\left( N2\pi \tfrac{r}{R} \right)dr} \\ 
 & =\tfrac{2{{N}^{2}}4{{\pi }^{2}}}{{{R}^{2}}}\tfrac{1}{N2\pi }\int\limits_{1}^{N2\pi }{2{{\cos }^{2}}\left( {{r}'} \right)d{r}'} \\ 
 & =\tfrac{{{N}^{2}}4{{\pi }^{2}}}{{{R}^{2}}},  
\end{split}
\end{equation} 
where $r\in \left[r_{0} ,r_{0} +R\right]$, with $r_{0} =1$.

Then, by using \eqref{ZEqnNum728565}, the quantity in \eqref{ZEqnNum826820} is as
\begin{equation} \label{ZEqnNum257554} 
\delta \left(r\right)=\left({\textstyle\frac{R}{2\pi }} \sqrt{{\textstyle\frac{N^{2} 4\pi ^{2} }{R^{2} }} } \right)^{-1} ={\textstyle\frac{1}{N}} ,   
\end{equation} 
that identifies the inverse of the number of oscillations. 

Therefore, \eqref{ZEqnNum257554} identifies the stability of the system state ${\left| \vec{\varphi }_{r}  \right\rangle} $ of the quantum computer in the $r$-th running sequence if $f_{D\left(\left. \beta \right\| \beta ^{*} \right)} \left(r\right)$ has the form of \eqref{ZEqnNum745453}. For an arbitrary $f_{D\left(\left. \beta \right\| \beta ^{*} \right)} \left(r\right)$, the stability parameter $\delta \left(r\right)$ is evaluated via \eqref{ZEqnNum826820}. The high value of $\delta \left(r\right)$ indicates that the stabilized system ${\left| \vec{\varphi }_{r}  \right\rangle} $ in \eqref{ZEqnNum703387} changes slowly. Particularly, if $\delta \left(r\right)\ge \delta ^{*} \left(r\right)$, where $\delta ^{*} \left(r\right)$ is a target value for $\delta \left(r\right)$, then the system state ${\left| \vec{\varphi }_{r}  \right\rangle} $ of the quantum computer is considered as stable.

The values of $f_{D\left(\left. \beta \right\| \beta ^{*} \right)} \left(r\right)$ \eqref{ZEqnNum745453} and $\delta \left(r\right)$ \eqref{ZEqnNum257554} for $R$ running sequences are depicted in \fref{figA1}.

\begin{center}
\begin{figure*}[!htbp]
%\vspace{-0.5cm}
\begin{center}
\includegraphics[angle = 0,width=1\linewidth]{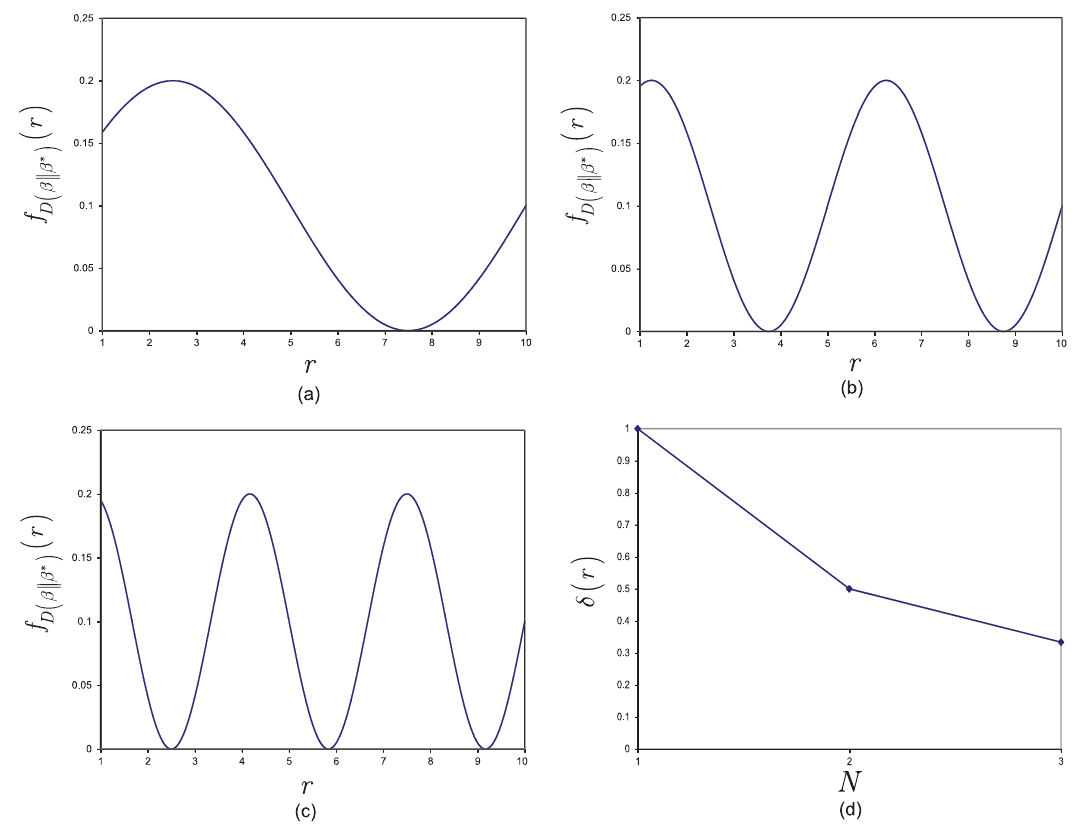}
\caption{The $f_{D\left(\left. \beta \right\| \beta ^{*} \right)} \left(r\right)$ relative entropy values between the gate parameters of the  stabilized system ${\left| \vec{\varphi }_{r}  \right\rangle} $ and target system ${\left| \vec{\varphi }_{r}^{*}  \right\rangle} $ for $R$ running sequences, $r=1,\ldots R$, $R=10$, ${{\mathbb{E}}}\left(D\left(\left. \beta \right\| \beta ^{*} \right)\right)=0.1$. (a) $N=1$. (b) $N=2$. (c) $N=3$. (d). Stability parameter $\delta \left(r\right)$ for the different relative entropy values.} 
 \label{figA1}
 \end{center}
\end{figure*}
\end{center}

\subsection{Gate Parameter Correlations}
Let ${\left| \vec{\varphi }_{r}  \right\rangle} $ be the stabilized state of the quantum computer in the $r$-th running sequence, with $\vec{\varphi }_{r} =\left[\varphi _{r,1} ,\ldots ,\varphi _{r,L} \right]^{T} $, and let ${\left| \vec{\varphi }_{r}^{*}  \right\rangle} $ be the target stabilized system state in the $r$-th running sequence, with $\vec{\varphi }_{r}^{*} =\left[\varphi _{r,1}^{*} ,\ldots ,\varphi _{r,L}^{*} \right]^{T} $. 

Then, let $\mu $ be a correlation coefficient \cite{slow} that measures the correlation of the gate parameters $\beta $ and $\beta ^{*} $ of ${\left| \phi  \right\rangle} $ \eqref{ZEqnNum703387} and ${\left| \phi ^{*}  \right\rangle} $ \eqref{ZEqnNum728789}, defined as
\begin{equation} \label{ZEqnNum865688} 
\mu \left(\beta ,\beta ^{*} \right)=\left|{\textstyle\frac{F\left(\left(f\left(\vec{\varphi }_{r} \right)-F\left(f\left(\vec{\varphi }_{r} \right)\right)\right)\left(f\left(\vec{\varphi }_{r}^{*} \right)-F\left(f\left(\vec{\varphi }_{r}^{*} \right)\right)\right)\right)}{\sqrt{F\left(\left(f\left(\vec{\varphi }_{r} \right)-F\left(f\left(\vec{\varphi }_{r} \right)\right)\right)^{2} \right)F\left(\left(f\left(\vec{\varphi }_{r}^{*} \right)-F\left(f\left(\vec{\varphi }_{r}^{*} \right)\right)\right)^{2} \right)} }} \right|,  
\end{equation} 
where $\left|\cdot \right|$ is the absolute value, $f\left(\vec{\varphi }_{r} \right)$ is a function of $r$ that represents the values of the gate parameter vector $\vec{\varphi }_{r} $, while $F\left(\cdot \right)$ is defined over $r\in \left[1,R\right]$, as
\begin{equation} \label{ZEqnNum418031} 
F\left( f\left( x \right) \right)=\tfrac{1}{R}\int\limits_{1}^{R}{f\left( x \right)dr}.
\end{equation} 
For illustration purposes, let us assume that $L=1$, and $f\left(\vec{\varphi }_{r} \right)$ is as
\begin{equation} \label{ZEqnNum843455} 
f\left(\vec{\varphi }_{r} \right)=X\cos ^{2} \left(CN2\pi {\textstyle\frac{r}{R}} \right),  
\end{equation} 
where we set $X$ as $X={2C^{2} N^{2} 4\pi ^{2} \mathord{\left/ {\vphantom {2C^{2} N^{2} 4\pi ^{2}  R^{2} }} \right. \kern-\nulldelimiterspace} R^{2} } $, while $C>0$ is a constant, thus \eqref{ZEqnNum418031} is evaluated as
\begin{equation} \label{18)} 
\begin{split}
   F\left( f\left( {{{\vec{\varphi }}}_{r}} \right) \right)&=\tfrac{1}{R}\int\limits_{1}^{R}{\tfrac{2{{C}^{2}}{{N}^{2}}4{{\pi }^{2}}}{{{R}^{2}}}{{\cos }^{2}}\left( CN2\pi \tfrac{r}{R} \right)dr} \\ 
 & =\tfrac{2{{C}^{2}}{{N}^{2}}4{{\pi }^{2}}}{{{R}^{2}}}\tfrac{1}{N2\pi }\int\limits_{1}^{N2\pi }{2{{\cos }^{2}}\left( {{r}'} \right)d{r}'} \\ 
 & =\tfrac{{{C}^{2}}{{N}^{2}}4{{\pi }^{2}}}{{{R}^{2}}}.  
\end{split}
\end{equation} 
For the target system ${\left| \vec{\varphi }_{r}^{*}  \right\rangle} $, the constant $C^{*} $ is set as
\begin{equation} \label{ZEqnNum876561} 
f\left(\vec{\varphi }_{r}^{*} \right)={\textstyle\frac{2\left(C^{*} \right)^{2} N^{2} 4\pi ^{2} }{R^{2} }} \cos ^{2} \left(C^{*} N2\pi {\textstyle\frac{r}{R}} \right),  
\end{equation} 
while for ${\left| \vec{\varphi }_{r}  \right\rangle} $, we set $C$ as $C>C^{*} $, thus \eqref{ZEqnNum865688} can be evaluated as
\begin{equation} \label{ZEqnNum426561} 
\mu \left(\beta ,\beta ^{*} \right)=\left|{\textstyle\frac{F\left(\left(f\left(\vec{\varphi }_{r} \right)-{\textstyle\frac{C^{2} N^{2} 4\pi ^{2} }{R^{2} }} \right)\left(f\left(\vec{\varphi }_{r}^{*} \right)-{\textstyle\frac{\left(C^{*} \right)^{2} N^{2} 4\pi ^{2} }{R^{2} }} \right)\right)}{\sqrt{F\left(\left(f\left(\vec{\varphi }_{r} \right)-{\textstyle\frac{C^{2} N^{2} 4\pi ^{2} }{R^{2} }} \right)^{2} \right)F\left(\left(f\left(\vec{\varphi }_{r}^{*} \right)-{\textstyle\frac{\left(C^{*} \right)^{2} N^{2} 4\pi ^{2} }{R^{2} }} \right)^{2} \right)} }} \right|,   
\end{equation} 
where
\begin{equation} 
\begin{split}
  & F\left( \left( f\left( {{{\vec{\varphi }}}_{r}} \right)-\tfrac{{{C}^{2}}{{N}^{2}}4{{\pi }^{2}}}{{{R}^{2}}} \right)\left( f\left( \vec{\varphi }_{r}^{*} \right)-\tfrac{{{\left( {{C}^{*}} \right)}^{2}}{{N}^{2}}4{{\pi }^{2}}}{{{R}^{2}}} \right) \right) \\ 
 & =F\left( f\left( {{{\vec{\varphi }}}_{r}} \right)f\left( \vec{\varphi }_{r}^{*} \right)-f\left( {{{\vec{\varphi }}}_{r}} \right)\tfrac{{{\left( {{C}^{*}} \right)}^{2}}{{N}^{2}}4{{\pi }^{2}}}{{{R}^{2}}}-\tfrac{{{C}^{2}}{{N}^{2}}4{{\pi }^{2}}}{{{R}^{2}}}f\left( \vec{\varphi }_{r}^{*} \right)+\tfrac{{{C}^{2}}{{N}^{2}}4{{\pi }^{2}}}{{{R}^{2}}}\tfrac{{{\left( {{C}^{*}} \right)}^{2}}{{N}^{2}}4{{\pi }^{2}}}{{{R}^{2}}} \right) \\ 
 & =\tfrac{1}{R}\left( \tfrac{2{{\pi }^{3}}{{C}^{2}}{{\left( {{C}^{*}} \right)}^{2}}{{N}^{3}}\left( \left( {{C}^{*}}-C \right)\sin \left( N4\pi \left( {{C}^{*}}+C \right) \right)+\left( {{C}^{*}}+C \right)\sin \left( N4\pi \left( {{C}^{*}}-C \right) \right) \right)}{\left( {{\left( {{C}^{*}} \right)}^{2}}-{{C}^{2}} \right){{R}^{3}}} \right),  
\end{split}
\end{equation} 
and
\begin{equation} \label{21)} 
\begin{split}
   F\left( {{\left( f\left( {{{\vec{\varphi }}}_{r}} \right)-\tfrac{{{C}^{2}}{{N}^{2}}4{{\pi }^{2}}}{{{R}^{2}}} \right)}^{2}} \right)&=F\left( f{{\left( {{{\vec{\varphi }}}_{r}} \right)}^{2}}-2f\left( {{{\vec{\varphi }}}_{r}} \right)\tfrac{{{C}^{2}}{{N}^{2}}4{{\pi }^{2}}}{{{R}^{2}}}+{{\left( \tfrac{{{C}^{2}}{{N}^{2}}4{{\pi }^{2}}}{{{R}^{2}}} \right)}^{2}} \right) \\ 
 & =\tfrac{1}{R}\left( \tfrac{{{C}^{4}}{{N}^{4}}8{{\pi }^{4}}}{{{R}^{3}}} \right) \\ 
 & =\tfrac{{{C}^{4}}{{N}^{4}}8{{\pi }^{4}}}{{{R}^{4}}},  
\end{split}
\end{equation} 
thus \eqref{ZEqnNum426561} is simplified as
\begin{equation} \label{ZEqnNum407594} 
\mu \left(\beta ,\beta ^{*} \right)=\left|{\textstyle\frac{2\pi ^{3} C^{2} \left(C^{*} \right)^{2} N^{3} \left(\left(C^{*} -C\right)\sin \left(N4\pi \left(C^{*} +C\right)\right)+\left(C^{*} +C\right)\sin \left(N4\pi \left(C^{*} -C\right)\right)\right)}{\left(\left(\left(C^{*} \right)^{2} -C^{2} \right)R^{4} \right)\sqrt{{\textstyle\frac{C^{4} N^{4} 8\pi ^{4} }{R^{4} }} {\textstyle\frac{\left(C^{*} \right)^{4} N^{4} 8\pi ^{4} }{R^{4} }} } }} \right|.  
\end{equation}

The values of \eqref{ZEqnNum407594} are depicted in \fref{figA2}.

\begin{center}
\begin{figure*}[!htbp]
%\vspace{-0.5cm}
\begin{center}
\includegraphics[angle = 0,width=0.8\linewidth]{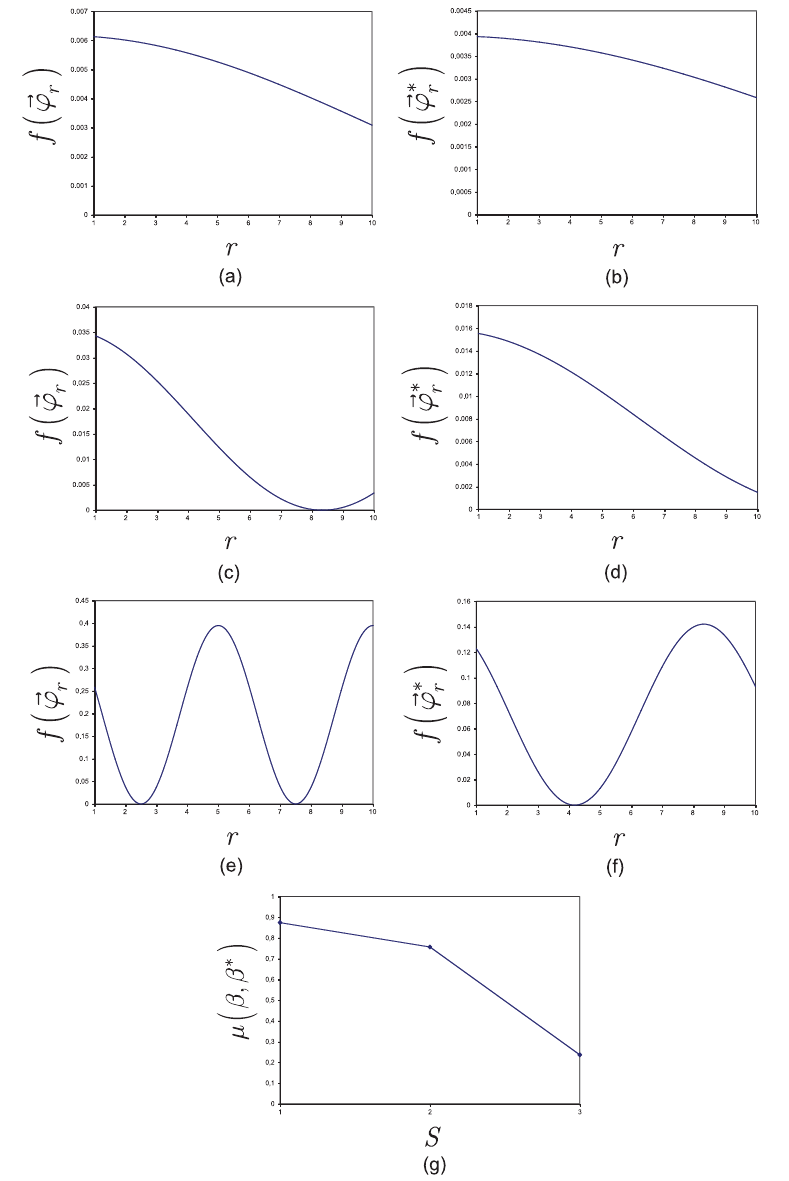}
\caption{Gate parameter values of $f\left(\vec{\varphi }_{r} \right)$ and $f\left(\vec{\varphi }_{r}^{*} \right)$ in function of running sequence $r$, $r=1,\ldots ,R$, $R=10$, $L=1$, for different  $N$, $C$ and $C^{*} $, $C\in \left[0,1\right]$, $C^{*} \in \left[0,1\right]$. (a-b) $N=1$, $C=0.125$, $C^{*} =0.1$. (c-d) $N=1$, $C=0.3$, $C^{*} =0.2$ (e-f) $N=2$, $C=0.5$, $C^{*} =0.3$. (g) The $\mu \left(\beta ,\beta ^{*} \right)$ correlation values between the gate parameters of (a-f), $S$ is an indexing parameter.} 
 \label{figA2}
 \end{center}
\end{figure*}
\end{center}

In \fref{figA3} the distribution of $\mu \left(\beta ,\beta ^{*} \right)$ in function of $C$ and $C^{*} $ are depicted, $C\in \left[0,1\right]$, $C^{*} \in \left[0,1\right]$ for different values of $N$, $f\left(\vec{\varphi }_{r} \right)$ and $f\left(\vec{\varphi }_{r}^{*} \right)$ are evaluated as given in \eqref{ZEqnNum843455} and \eqref{ZEqnNum876561}, $L=1$, and $R=10$ .

\begin{center}
\begin{figure*}[!htbp]
%\vspace{-0.5cm}
\begin{center}
\includegraphics[angle = 0,width=1\linewidth]{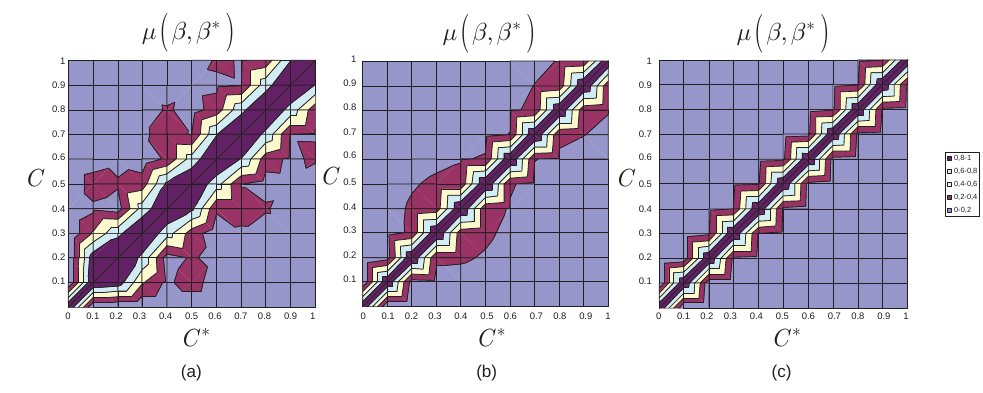}
\caption{The distribution of $\mu \left(\beta ,\beta ^{*} \right)$ in function of $C$ and $C^{*} $ at  $f\left(\vec{\varphi }_{r} \right)$ and $f\left(\vec{\varphi }_{r}^{*} \right)$, $L=1$, $R=10$. (a) $N=1$. (b) $N=2$. (c) $N=3$.} 
 \label{figA3}
 \end{center}
\end{figure*}
\end{center} 

\section{Conclusions}
\label{sec5}
Here, we defined a method for the learning of stable quantum evolutions in gate-model quantum computer architectures. The model stabilizes an optimal state of a quantum computer to maximize the particular objective function of an arbitrary problem fed into the quantum computer. The model learns a stabilizer matrix that stabilizes the state of the quantum computer through an arbitrary number of run sequences. We also defined a scheme to characterize the stability of the stabilized states via unsupervised learning of the stability classes of the stabilized sequences. The results are particularly useful for gate-based quantum computations and gate-model quantum computer architectures.

%
%\section*{Statements}
%\subsection*{Ethics statement}
%This work did not involve any active collection of human data.
%\subsection*{Data accessibility statement}
%This work does not have any experimental data.
%\subsection*{Competing financial interests statement}
%We have no competing financial interests.
%\subsection*{Competing interests statement}
%We have no competing interests.
%\subsection*{Funding}
%No relevant funding. 
%\subsection*{Authors’ contributions}
%L.GY. designed the protocol and wrote the manuscript. L.GY. and S.I. analyzed the results. All authors reviewed the manuscript.

\section*{Acknowledgements}
The research reported in this paper has been supported by the National Research, Development and Innovation Fund (TUDFO/51757/2019-ITM, Thematic Excellence Program). This work was partially supported by the National Research Development and Innovation Office of Hungary (Project No. 2017-1.2.1-NKP-2017-00001), by the Hungarian Scientific Research Fund - OTKA K-112125 and in part by the BME Artificial Intelligence FIKP grant of EMMI (BME FIKP-MI/SC).

\newpage
\onecolumn
\appendix
\setcounter{table}{0}
\setcounter{figure}{0}
\setcounter{equation}{0}
\setcounter{algocf}{0}
\renewcommand{\thetable}{\Alph{section}.\arabic{table}}
\renewcommand{\thefigure}{\Alph{section}.\arabic{figure}}
\renewcommand{\theequation}{\Alph{section}.\arabic{equation}}
\renewcommand{\thealgocf}{\Alph{section}.\arabic{algocf}}

\section{Appendix}

\subsection{Abbreviations}
\begin{description}
\item[QG] Quantum Gate structure of a gate-model quantum computer
\item[RKHS] Reproducing Kernel Hilbert Space
\end{description}

\subsection{Notations}
\setlength{\arrayrulewidth}{0.1mm}
\setlength{\tabcolsep}{5pt}
\renewcommand{\arraystretch}{1.5}

The notations of the manuscript are summarized in  \tref{tab2}.
\begin{center}
\begin{longtable}{||l|p{4.5in}||}
\caption{Summary of notations.}
\label{tab2}
\endfirsthead
\endhead
\hline
$QG$ & Quantum gate structure of a gate-model quantum computer. \\ \hline 
$L$ & Number of unitaries in the $QG$ structure of the quantum computer. \\ \hline 
$U_{i} \left(\theta _{i} \right)$ & An $i$-th unitary gate, $U_{i} \left(\theta _{i} \right)=\exp \left(-i\theta _{i} P\right)$, where $P$ is a generalized Pauli operator formulated by a tensor product of Pauli operators $\left\{X,Y,Z\right\}$, while $\theta _{i} $ is referred to as the gate parameter associated to $U_{i} \left(\theta _{i} \right)$. \\ \hline 
${| \vec{\theta } \rangle} $ & System state of the quantum computer, ${| \vec{\theta } \rangle} =U_{L} \left(\theta _{L} \right)U_{L-1} \left(\theta _{L-1} \right)\ldots U_{1} \left(\theta _{1} \right)$, where $U_{i} \left(\theta _{i} \right)$ identifies an $i$-th unitary gate. \\ \hline 
$\vec{\theta }$ & Gate parameter vector, a collection of gate parameters of the $L$ unitaries, $\vec{\theta }=\left[\theta _{1} ,\ldots ,\theta _{L-1} ,\theta _{L} \right]^{T} $. \\ \hline 
$C$ & Classical objective function of a computational problem fed into the quantum computer.  \\ \hline 
$f(\vec{\theta })$ & Objective function of the quantum computer. \\ \hline 
${| \vec{\theta }^{*}  \rangle}$ & Optimal state of the quantum computer. \\ \hline 
$\vec{\theta }^{*} $ & Gate parameter vector in the ${| \vec{\theta }^{*}  \rangle}$ system state, $\vec{\theta }^{*} =\left[\theta _{1}^{*} ,\ldots ,\theta _{L}^{*} \right]^{T} $. \\ \hline 
$f(\vec{\theta }^{*})$ & Objective function value in the ${| \vec{\theta }^{*}  \rangle}$ system state. \\ \hline 
$P$ & Generalized Pauli operator formulated by the tensor product of Pauli operators $\left\{X,Y,Z\right\}$. \\ \hline 
${\left| \vec{\varphi } \right\rangle} $ & Stable system state with objective function $f(\vec{\varphi })=\langle \vec{\varphi }|C|\vec{\varphi }\rangle =f(\vec{\theta }^{*})$.  \\ \hline 
$\vec{\varphi }$ & Gate parameter vector associated to the stable system state ${\left| \vec{\varphi } \right\rangle} $, $\vec{\varphi }=\left[\varphi _{1} ,\ldots ,\varphi _{L} \right]^{T} $. \\ \hline 
$\vec{\theta }_{r}^{*} $ & Gate parameter vector, identifies the quantum state ${| \vec{\theta }_{r}^{*} \rangle}$ of an $r$-th running sequence, $r=1,\ldots ,R$, of the quantum computer,  $\vec{\theta }_{r}^{*} =[\theta _{r,1}^{*} ,\ldots ,\theta _{r,L}^{*} ]^{T}$. \\ \hline 
$\vec{\varphi }_{r} $ & Gate parameter vector, identifies the stabilized quantum state ${\left| \vec{\varphi }_{r}  \right\rangle} $ of an $r$-th sequence of the quantum computer, $\vec{\varphi }_{r} =\left[\varphi _{r,1} ,\ldots ,\varphi _{r,L} \right]^{T} $. \\ \hline 
$\alpha $ & Matrix, formulated via the $R$ sequences of the quantum computer, $\alpha =[\vec{\theta }_{1}^{*} ,\ldots ,\vec{\theta }_{R}^{*}]$. \\ \hline 
$\beta $ & Matrix, formulated via the $R$ stabilized sequences of the quantum computer, $\beta =\left[\vec{\varphi }_{1} ,\ldots ,\vec{\varphi }_{R} \right]$. \\ \hline 
$S$ & Stabilizer matrix, yields $\beta $ from  $\alpha $ as $\beta =S^{T} \alpha $, $S^{T} S=I$, where $I$ is the identity matrix. \\ \hline 
${\rm {\mathcal F}}$ & Solution framework. \\ \hline 
${\rm {\mathcal P}}_{S} $ & Stabilization procedure. \\ \hline 
${\rm {\mathcal A}}_{S} $ & Stabilization algorithm. \\ \hline 
${\rm {\mathcal A}}_{C} $ & Classification algorithm. \\ \hline 
$C\left(\beta \right)$ & Stability-class of $\beta $. \\ \hline 
$\Delta (\vec{\theta }_{r}^{*})$ & Vector, defined for an $r$-th sequence of the quantum computer, $\Delta (\vec{\theta }_{r}^{*})=\vec{\theta }_{r}^{*} -\vec{\theta }_{r+1}^{*} $. \\ \hline 
$\Delta \left(\vec{\varphi }_{r} \right)$ & Vector, defined for an $r$-th stabilized sequence of the quantum computer, $\Delta \left(\vec{\varphi }_{r} \right)=\vec{\varphi }_{r} -\vec{\varphi }_{r+1} $. \\ \hline 
$\Delta \alpha $ & A collection of $\Delta (\vec{\theta }_{r}^{*})$ vectors, $\Delta \alpha =[\Delta (\vec{\theta }_{1}^{*} ),\ldots ,\Delta (\vec{\theta }_{R-1}^{*})]$. \\ \hline 
$\Delta \beta $ & A collection of $\Delta \left(\vec{\varphi }_{r} \right)$ vectors, $\Delta \beta =\left[\Delta \left(\vec{\varphi }_{1} \right),\ldots ,\Delta \left(\vec{\varphi }_{R-1} \right)\right]$. \\ \hline 
$\chi $ & Sum defined via $\Delta \beta $ as $\chi =\sum _{r}^{R-1}\left\| \Delta \left(\vec{\varphi }_{r} \right)\right\| _{2}^{2}  $, where $\left\| \cdot \right\| _{2}^{2} $  is the squared ${\rm L}2$-norm. \\ \hline 
$\gamma _{rs} $ & Parameter, defined as $\gamma _{rs} =\omega _{rs} \left\| \Delta \left(\vec{\varphi }_{r} \right)-\Delta \left(\vec{\varphi }_{s} \right)\right\| _{2}^{2} $, where $\Delta \left(\vec{\varphi }_{r} \right)$ and $\Delta \left(\vec{\varphi }_{s} \right)$ are derived for an $r$-th and $s$-th sequences, $s>r$, while $\omega _{rs} $ is a weight coefficient. \\ \hline 
$\tau $ & A sum, defined for the $r=1,\ldots ,R-1$ sequences of the quantum computer, $\tau =\sum _{r}^{R-1}\sum _{s}^{R-1}\gamma _{rs}   .$ \\ \hline 
$F^{*} $ & Objective function of the stabilization procedure. \\ \hline 
$c$ & Regularization constant. \\ \hline 
$\omega _{rs} $ & Weight coefficient for the $r$-th and $s$-th sequences, $s>r$, \newline $\omega _{rs} =\left\{\begin{array}{l} {\exp \left(-{\textstyle\frac{\left\| \Delta (\vec{\theta }_{r}^{*})-\Delta \left(\vec{\theta }_{s}^{*} \right)\right\| ^{2} }{\zeta }} \right),{\rm if\; }\left(s-r\right)\le \kappa } \\ {0,{\rm \; otherwise}} \end{array}\right. ,$\newline where $\kappa $ and $\zeta $ are nonzero parameters. \\ \hline 
$W$ & Weight matrix, $W_{rs} =\omega _{rs} $. \\ \hline 
$\eta $ & Diagonal matrix of the weight coefficients, $\eta _{rr} =\sum _{s}\omega _{rs}  $, with relation $\left(\Delta \beta \right)^{T} \eta \Delta \beta =I$. \\ \hline 
$\sigma $ & Matrix, $\sigma =I+c\left(\eta -W\right)$. \\ \hline 
$\Omega $ & Parameter, $\Omega ={\rm Tr}(S^{T} (\Delta \alpha \eta (\Delta \alpha )^{T})S)$. \\ \hline 
$\lambda $ & Diagonal matrix of eigenvalues. \\ \hline 
${\rm {\mathcal T}}$ & Training set of random gate parameters of the $QG$-structure of the quantum computer, ${\rm {\mathcal T}}=\left(X_{1} ,\ldots ,X_{q} \right)$, where $X_{i} $ is a $d$-dimensional random vector. \\ \hline 
$\bar{{\rm {\mathcal T}}}$ & Mean of all training samples. \\ \hline 
$Y_{r} $ & Learned output for an $r$-th sequence. \\ \hline 
$y_{i,j}^{\left(r\right)} $ & Learned $j$-th output for an $i$-th unitary of an $r$-th sequence. \\ \hline 
$\Delta \tilde{y}_{i}^{\left(r\right)} $ & Difference, $\Delta \tilde{y}_{i}^{(r)} =|\tilde{y}_{i}^{(r)} -\tilde{y}_{i+1}^{(r)}|$. \\ \hline 
$C_{k} $ & A $k$-th stability class, $k=1,\ldots ,K$, for the classification of the stability of the stabilized sequences of $\beta $.  \\ \hline 
${\rm {\mathcal C}}$ & Set of $K$ stability classes, ${\rm {\mathcal C}}=\left\{C_{1} ,\ldots ,C_{K} \right\}$. \\ \hline 
$C\left(\vec{\varphi }_{r} \right)$ & Stability class of a stabilized sequence $\vec{\varphi }_{r} $. \\ \hline 
$C\left(\beta \right)$ & Stability classes of all $\vec{\varphi }_{r} $ stabilized sequences, $r=1,\ldots ,R$, of $\beta $, $C\left(\beta \right)=\left[C\left(\vec{\varphi }_{1} \right),\ldots ,C\left(\vec{\varphi }_{R} \right)\right]^{T} $. \\ \hline 
$\delta _{+}^{K} $ & Space of $K\times K$ symmetric positive semi-definite matrices. \\ \hline 
${\rm {\mathcal X}}$ & Input space. \\ \hline 
${\rm {\mathcal K}}$ & Kernel machine. \\ \hline 
${\rm {\mathcal H}}$ & Reproducing Kernel Hilbert Space (RKHS) associated with the kernel machine ${\rm {\mathcal K}}$. \\ \hline 
$\Gamma $ & A nonlinear map, $\Gamma :{\rm {\mathcal X}}\to {\rm {\mathcal H}}$, from ${\rm {\mathcal X}}$ to the high-dimensional Hilbert space ${\rm {\mathcal H}}$ associated with ${\rm {\mathcal K}}$. \\ \hline 
$f_{d} \left(x,y\right)$ & ${\rm L}2$ distance in ${\rm {\mathcal H}}$ , $f_{d} \left(x,y\right)=\left\| x-y\right\| _{2}^{2} $. \\ \hline 
$f_{k}^{{\rm {\mathcal C}}} $ & Probabilistic classifier function, $f_{k}^{{\rm {\mathcal C}}} :{\rm {\mathcal S}}\to \left[0,1\right]$, where ${\rm {\mathcal S}}=\bigcup _{\vec{\varphi }_{r} \in \beta }\left\{\left. \varphi _{r,i} \right|\varphi _{r,i} \in \vec{\varphi }_{r} \right\} $, $\sum _{k=1}^{K}f_{k}^{{\rm {\mathcal C}}} \left(\varphi _{r,i} \right) =1.$ \\ \hline 
$\nu _{k} \left(\varphi _{r,i} \right)$ & Parameter, associated with a particular $\varphi _{r,i} $ and $k$-th class $C_{k} $, $\nu _{k} \left(\varphi _{r,i} \right)={\textstyle\frac{1}{\pi }} \left(\varphi _{r,i} \right).$ \\ \hline 
$\nu _{k} \left(\vec{\varphi }_{r} \right)$ & Collection of $L$ parameters,  $\nu _{k} \left(\vec{\varphi }_{r} \right)=\left[\nu _{k} \left(\varphi _{r,1} \right),\ldots ,\nu _{k} \left(\varphi _{r,L} \right)\right]$, where $\sum _{i=1}^{L}\nu _{k} \left(\varphi _{r,i} \right) =1$. \\ \hline 
$\phi _{k} \left(\vec{\varphi }_{r} \right)$ & Non-linear map in ${\rm {\mathcal H}}$, defined for a stabilized sequence $\vec{\varphi }_{r} $ as  $\phi _{k} \left(\vec{\varphi }_{r} \right)=\left(\nu _{k} \left(\vec{\varphi }_{r} \right)\right)^{T} f_{k}^{{\rm {\mathcal C}}} \left(\vec{\varphi }_{r} \right)$, where $\nu _{k} \left(\varphi _{r,i} \right)={\textstyle\frac{1}{\pi }} \left(\varphi _{r,i} \right),$ and $f_{k}^{{\rm {\mathcal C}}} \left(\varphi _{r,i} \right)\in \left[0,1\right]$ outputs a probability. \\ \hline 
$\iota \left(\cdot \right)$ & Function, returns an inner product. \\ \hline 
$Z$ & A parameter of procedure ${\rm {\mathcal P}}_{S} $. \\ \hline 
$B$ & A parameter of procedure ${\rm {\mathcal P}}_{S} $. \\ \hline 
$\ell _{k} \left(\vec{\varphi }_{r} \right)$ & A parameter of algorithm ${\rm {\mathcal A}}_{C} $. \\ \hline 
$\xi \left(\vec{\varphi }_{r} \right)$ & A parameter of algorithm ${\rm {\mathcal A}}_{C} $. \\ \hline
\end{longtable}
\end{center}
\end{document}